\documentclass{book}

\usepackage{makeidx}
\usepackage{amsthm,amsmath,amssymb,amsfonts}
\usepackage{setspace,graphicx}
\usepackage{Generic}  
\usepackage{url} 
\usepackage{mdframed}
\setcounter{secnumdepth}{3} 

\newtheoremstyle{break}%
{}{}%
{\normalfont}{}%
{\bfseries}{}
{\newline}{}

\newtheorem{lemma}{Lemma}
\theoremstyle{plain}
\newtheorem{theorem}{Theorem}[section]
\theoremstyle{break}
\newtheorem{example}{Example}[section]

\surroundwithmdframed[
  linewidth=2pt,
  linecolor=black,
  topline=false,
  bottomline=false,
  rightline=false,
  leftline=true,
  innerleftmargin=5pt,
  innerrightmargin=5pt,
  skipabove=10pt,
  skipbelow=10pt
]{example}

\surroundwithmdframed[
  linewidth=2pt,
  linecolor=black,
  topline=false,
  bottomline=false,
  rightline=false,
  leftline=true,
  innerleftmargin=5pt,
  innerrightmargin=5pt,
  skipabove=10pt,
  skipbelow=10pt
]{theorem}

\begin{document}

%


\setcounter{chapter}{18} 

\chapter[Perturbations of Markov Chains]{Perturbations of Markov Chains}
\vspace{-12mm}
\begin{Large}
{\em  Daniel Rudolf, Aaron Smith, and Matias Quiroz\footnote{To appear as Chapter 19 in the second edition of the Handbook of MCMC.}}
\end{Large}

\bigskip
\bigskip
\bigskip
\bigskip
\bigskip

\bigskip

Running MCMC involves compromise. At a minimum, we must almost always replace the ideal continuous state space of our statistical models with the discrete state spaces that can fit in compute memory. More sophisticated algorithms, such as those based on Langevin diffusion or Hamiltonian Monte Carlo, require further approximations. For more complex models and larger datasets, even evaluating the likelihood at a single point may be intractable, and so exact likelihood evaluations may be replaced by surrogates and approximations. 

These compromises introduce bias to the results of the MCMC computations. While some of these biases can be corrected easily (e.g. by adding a Metropolis-Hastings adjustment to the Langevin algorithm as in \cite{Besag94}), others cannot. In these situations, a careful analysis of the algorithm requires analysis of this non-vanishing bias. Perturbation theory refers to the simplest and most widespread approach to bounding this bias. The basic idea is to compare the perturbed MCMC algorithm that is actually being run to the ideal MCMC algorithm that you would like to run. As long as the chains are sufficiently close (and other technical conditions are met), you can guarantee that the bias is small. Indeed, exactly this approach has been used to show that the fine discretizations of continuous state-space models used on computers introduce a negligible amount of bias \cite{BREYER2001123}.

This chapter surveys progress on three related topics: the motivating question of when and how ``perturbed" MCMC chains are developed, the theoretical problem of how perturbation theory can be used to analyze such chains, and finally the question of how the theoretical analyses can lead to practical advice.

\section{Introduction to Perturbed MCMC: Using What You Know}\label{intro}%

As we've seen throughout this book, the choice of Monte Carlo algorithm is often driven by how much the users know about the target distribution. If they know a great deal about the structure of the target, they might use a specialized algorithm such as the Polya-Gamma sampler of \cite{PolsonPolyaGamma12}. In the prototypical case that they can evaluate the target density up to an unknown multiplicative constant, they might use the Metropolis-Hastings algorithm. If they can't even do that, but know how to compute an unbiased estimator of the target density, they might use pseudomarginal MCMC (see Chapter 16). 

While these algorithms have wildly different behaviours, they are all ``asymptotically exact" algorithms - that is, basic MCMC theory guarantees that the associated Markov chain will eventually converge to the target distribution under modest conditions \cite{meyn_tweedie_glynn_2009}. A typical analysis of the efficiency of an asymptotically exact algorithm will focus on its asymptotic variance, as in \cite{MCCLT04}.

This chapter is about the analysis of algorithms that aren't even asymptotically exact. As a typical example, the approximate Bayesian computation algorithm is a popular choice when the target distribution is not even known up to a normalizing constant, but one can produce exact samples from the data generating process of the associated statistical models \cite{Tav97}. These algorithms will typically have \textit{some} stationary distribution, but it will not be the ostensible target. As a result, a theoretical analysis requires bounds both on the asymptotic variance \textit{and} on the bias. The main idea behind perturbation theory in the context of MCMC is to analyze the bias of a non-exact algorithm by comparing it to a ``similar" exact algorithm. Just as the choice of exact MCMC algorithm is heavily guided by what is easy to compute, so is the choice of perturbed MCMC algorithm. 

\section{Basic Principles}

We describe the basic principles of perturbed Markov chains, focusing on toy examples.

\subsection{Basic Principles: A Taxonomy of Perturbed Markov Chains} \label{SecTaxonomy}

We give a brief and highly subjective taxonomy of situations in which perturbed Markov chain methods have been useful. First, we divide methods by the type of problem that they are intended to solve:

\begin{enumerate}
    \item \textbf{The target distribution is computationally intractable.} For example, the target distribution might be defined in terms of the solutions of expensive-to-solve PDE models. Papers such as \cite{ApproxBayes10} and \cite{Conrad16} deal with this by reducing the number of times a PDE is solved or the fidelity of the solutions. More simply, evaluating the target distribution may involve reading a large dataset.
    
    \item \textbf{The ideal algorithm is computationally intractable.} For example, Hamiltonian Monte Carlo \cite{DUANE1987216,Neal95} involves solving a system of differential equations. In practice, these differential equations are solved by numerical integrators such as the Verlet integrator \cite{Neal95}. Typical analyses, such as \cite{pmlr-v89-mangoubi19a,Chen20}, depend heavily on careful comparisons between ``ideal" and ``perturbed" HMC algorithms. 
    \item \textbf{The nominal target distribution is not the real target distribution.} For example, subsampling and other approximations can be used to obtain ``implicit" regularization \cite{NEURIPS2021_2b6bb535, smith2021regularisationSGD}, tempering \cite{Meent14}, or other desirable adjustments to the target. Although the actual algorithms in this category may be exactly the same as those in the first category, the viewpoint is very different.
\end{enumerate}

The focus of this chapter is on the first problem, though perturbation techniques are also used for the other two. We give an expanded list of some common reasons that one might use a perturbed MCMC algorithm for the first problem. These are ordered very roughly by the amount of information required to use the method: proxies near the start make strong assumptions and tend to be easier to use efficiently, while proxies near the end assume much less and tend to have more severe problems.

\begin{enumerate}
\item \textbf{Subsampling and Approximations for Simple Targets:} The posterior of simple models such as generalized linear models and other ``product-form" likelihoods is
\begin{equation} \label{EqProdPost}
\pi(\theta) =\frac{f(\theta | X)}{\int f(\theta | X)d\theta},\quad f(\theta | X) = f(\theta) \prod_{i=1}^{n} f(X_{i}|\theta),
\end{equation}
where $f(\theta)$ is the prior. We refer to  $f(\theta|X)$ in \eqref{EqProdPost} as the posterior (ignoring the normalizing constant).  For very large datasets, the posterior $f(\theta|X)$ is difficult to evaluate. In this context, it is natural to use only a sample of the data when computing posterior distributions. At its simplest, this would just mean choosing a random sample $S \subseteq \{1,2,\ldots,n\}$ and using a ``plug-in" estimator of the form
\begin{equation} \label{EqNoProxy}
\widehat{f}(\theta | X) = f(\theta) \prod_{i \in S} f(X_{i}|\theta),
\end{equation}
whenever one would usually use the full posterior in \eqref{EqProdPost}. In practice, this simple posterior is often too crude to yield a ``small" perturbation of the original algorithm \cite{bardTall17}. Intuitively, this random posterior estimate is highly variable, as it is based on a small number of observations. This may be addressed by pre-computing a global surrogate function $G$ (that accounts for the observations that were not sampled) and using surrogates of the form
\begin{equation} \label{EqSimpleProxy}
\widehat{f}(\theta | X) = f(\theta) \prod_{i \in S} f(X_{i}|\theta)^{\alpha} G(\theta)^{\beta},
\end{equation}
with suitably chosen $\alpha,\beta\in \mathbb{R}$.

There is a very large literature dedicated to improving this simple idea. See e.g. \cite{wellingSGLD11} for a very influential early paper, \cite{Korattikara2013AusterityIM, bardenet2014towards} for early applications of subsampling in the Metropolis-Hastings algorithm, \cite{Quiroz19,Quiroz18} for practical advice on pre-computation in subsampling methods, \cite{Montanari14} for an illustration of how natural pre-computations can fail completely, and  \cite{Core16, Chen22} for a distinctive approach based on reducing a large dataset to a carefully chosen and logarithmically-small ``core" set. 

\item \textbf{Parallelization and Divide-And-Conquer:} Sometimes, due to either privacy concerns or limitations on machine memory, it is not practical to hold an entire dataset in one computer. In this situation, it is not possible to run standard subsampling algorithms. This leads to ``divide-and-conquer" or ``parallel" algorithms. Conceptually, these algorithms have two parts: an MCMC algorithm that is run on each individual computer, and an overall algorithm that merges the MCMC outputs. The simplest of these algorithms runs independent MCMC algorithms on each computer and then merge the full outputs in a postprocessing step, as in \cite{scott2022bayes} and \cite{Wang2013ParallelizingMV}. More complex algorithms allow a small amount of communication between computers while the MCMC chains are running, as in \cite{neiswanger2013asymptotically,EPWay2020,RendelGCMC21}.

\item \textbf{Subsampling and Approximations for Complex Targets} 

The product-form distributions in equation \eqref{EqProdPost} differ from more complex target distributions in two important ways. The first is that it is very easy to write down a sensible plug-in estimator (as in equations \eqref{EqNoProxy} and \eqref{EqSimpleProxy}) for product-form targets, while there may not be any straightforward plug-in estimator for more complex targets. The second is that the computational complexity of evaluating product-form targets scales roughly linearly in the size of the dataset\footnote{Ignoring discontinuities that come from e.g. the sizes of different types of machine memory.} $n$, while more complex targets may have complexity that grows much more quickly (e.g. if matrix operations are required as in \cite{Horseshoe2020,rastelli2023computationally}). As a consequence, subsampling is both more valuable and more difficult in this context.

As a concrete example of this phenomenon, evaluating the posterior distribution for latent position models of random graphs has a cost that grows like the square of the size of the dataset, but the posterior has a product structure that can easily be approximated \cite{rastelli2023computationally}.

\item \textbf{Inverse Problems from PDEs and Other Expensive Models:} Models such as \eqref{EqProdPost} are amenable to subsampling or divide-and-conquer because of two important properties: (i) it is easy to relate the posterior distribution of a \textit{full} sample and a \textit{partial} sample, and (ii) the partial-sample posteriors are much easier to compute. In the previous point, we saw that it was sometimes possible to extend \textit{similar} ideas to more complex models such as \cite{Horseshoe2020}. However, at some point, this approach seems to break down completely. 

As a prototypical example, consider the problem of making inferences about subsurface hydrology based on surface measurements such as electromagnetic surveys \cite{Subwater10}. The parameters of interest are related to the observations by the solution to a differential equation, and there is no obvious way to substantially speed up this computation by looking at only subsamples of the data.\footnote{Subsampling may substantially speed up computation for other reasons - for example, the associated posterior distribution may be simpler and thus easier to sample from.} In this context, one typically treats the likelihood as a black box and focuses on reducing the number of evaluations \cite{Conrad16} or their fidelity \cite{DodSub15}. Note that this is closely related to delayed-acceptance MCMC, which is \textit{unperturbed} but typically gives smaller per-step savings (see e.g. \cite{MLMDA23} for a sophisticated version of delayed-acceptance MCMC in the present context). 
\item \textbf{Simulation-Based Methods:} In some models, there is no tractable way to evaluate the likelihood even once - but it is possible to \textit{simulate} from the model for fixed parameter values. As a prototypical toyexample, consider a simple model of the spread and evolution of a disease in a large population, but with only one datapoint collected: the haplotype of the disease of the last infected person. While it is straightforward to sample from many epidemic models, and it is straightforward to calculate the likelihood of any \textit{full} path of an epidemic, calculating the likelihood of only \textit{one person} would involve a completely intractable summation over the full unobserved infection path. A common approach in this setting is approximate Bayesian computation (see the main example in \cite{Jarno16})\footnote{Exact approaches are also common - the simplest is to augment the state space with the missing data - but we don't discuss them in this chapter.}. See Chapter 16 of \cite{brooks2011handbook}, Chapter 11 of this book, and \cite{Dashti2017} for more details.
\item \textbf{Doubly-Intractable Distributions:} Finally, there are the worst problems: those for which you can neither evaluate the likelihood even once, nor easily simulate from the model for fixed parameter values. A standard simple example is the \textit{exponential random graph model} (ERGM), an early version of which was proposed in \cite{ERGM96}. These models have densities of the form
\begin{equation} 
f(G|\theta) = Z_{\theta}^{-1} e^{\sum_{j=1}^{k} \theta_{j} S_{H_{j}}(G)},
\end{equation} 
where $G$ is a graph on $n$ vertices, $\theta =(\theta_{1},\ldots,\theta_{k})$ are parameters, $H_{1},\ldots,H_{k}$ are graphs with fewer than $n$ vertices, and $S_{H}(G)$ is the number of times that $H$ appears as a subgraph in $G$. Note that the normalizing constant $Z_{\theta}$ depends on the parameter $\theta$ and cannot generally be computed, so standard MCMC algorithms cannot be applied to estimate $\theta$. Even worse, sampling from the model for fixed $\theta$ is usually difficult and sometimes intractable \cite{Bhamidi2008MixingTO}.

The survey \cite{IntractableSurvey18} covers a wide variety of approaches for these very difficult problems. Some use only a very limited number of simulations (e.g. the exchange algorithm \cite{MurraryExchange06}, the auxiliary-variable methods \cite{MollerIntr06}, and their approximations \cite{kang2023measuring}), a moderate number of simulations (e.g. \cite{ERGM23}), or none at all (e.g. \cite{Roulette15}). 

On seeing these algorithms for the first time, it is natural to ask: since it is easy to obtain an unbiased estimate of the normalizing constant $Z_{\theta}$, could one not use a trick similar to the pseudomarginal algorithm to obtain exact MCMC? The answer is unfortunately a resounding \textit{no}: under very general conditions, it is impossible to turn an unbiased estimate of $Z_{\theta}$ into a nonnegative unbiased estimator of $Z_{\theta}^{-1}$. For details we refer to \cite{jacob2015nonnegative}.

\end{enumerate}

Note that we don't discuss theoretical approaches, such as the ``scaling limit" approach introduced in \cite{gelman97}, that don't directly control bias. We also don't discuss algorithms, such as \cite{Besag94, Crist05, quiroz18delayed}, that can be viewed as approximations of ``ideal" algorithms but still maintain the exact target distribution.

\subsection{Basic Principles: Mathematical Theory}
\label{subsec: basic_math_theory}

We provide a gentle introduction into the mathematical theory, including a simple theorem and two applications. These results are all pedagogical - see Section~\ref{SecMathPert} for a  more powerful theorem and Section~\ref{SecExampleAppsGeneric} for more realistic applications.

We begin by setting the notation. Let $\Omega$ be a Polish space and let $\mathcal{F}=\mathcal{B}(\Omega)$ be the associated Borel $\sigma$-algebra. Define $\Pi(\mu,\nu)$ to be the collection of all \textit{couplings} of distributions $\mu, \nu$ on $\Omega$, that is, the set of probability measures $\xi$ on $(\Omega\times\Omega,\mathcal{F}\otimes \mathcal{F})$
satisfying
\begin{equation*}\xi(A,\Omega) = \mu(A), \qquad \xi(\Omega,A) = \nu(A)
\end{equation*}
for all $A \in \mathcal{F}$. 
Let $d$ be a metric\footnote{It is implicitly assumed that $d\colon \Omega\times \Omega \to [0,\infty)$ is lower semi-continuous w.r.t. the product topology of $\Omega$.} on $\Omega$.
A common tool to measure the difference of distributions in the setting of perturbation theory of Markov chains is the \textit{Wasserstein distance}, defined as 
\begin{equation*}
W_{d}(\mu,\nu) = \inf_{\xi \in \Pi(\mu,\nu)} \mathbb{E}_{(X,Y) \sim \xi}[d(X,Y)].
\end{equation*}
The popular total variation distance is given by the Wasserstein distance associated with the trivial metric $d(x,y) = \textbf{1}_{x \neq y}$, cf. \cite[Theorem~19.1.6]{douc2018markov}.  We use $d_{TV}$ as shorthand for this metric on probability spaces and mention already that its deficiency is that it is not able to take a possible ``metric fine structure" of the space into account.

We give our first perturbation result. Let $(X_n)_{n\in\mathbb{N}_0}$ be an unperturbed, exact Markov chain with transition kernel $Q$ on $\Omega$ and $(\widetilde{X}_n)_{n\in\mathbb{N}_0}$ be a perturbed one with transition kernel $K$ on $\Omega$. By $p_n$ and $\widetilde{p}_n$ we denote the distribution of $X_n$ and $\widetilde{X}_n$, respectively. Moreover, remind yourself that $p_n=p_0Q^n$ and $\widetilde{p}_n= \widetilde{p}_0 K^n$.

We make two main assumptions. First, the transition kernel of the unperturbed chain satisfies a contraction property that yields exponential convergence to the corresponding stationary distribution. Second, we impose a uniform approximation property, that gives that the perturbed and the unperturbed kernel are close to each other. More precisely:
\begin{enumerate}
\item \textbf{Contraction Property:} There exists $\alpha > 0$ so that 
\begin{equation}\label{IneqSimpleCont}
W_{d}(Q(x,\cdot), Q(y,\cdot))\leq (1-\alpha)d(x,y), \qquad \forall x,y \in \Omega.
\end{equation}
\item 
\textbf{Approximation Property:} There exists $0\leq \varepsilon <\infty$ so that 
\begin{equation} \label{IneqSimpleApprox}
    W_d(Q(x,\cdot),K(x,\cdot)) \leq \varepsilon, \qquad \forall x\in\Omega.
\end{equation}

\end{enumerate}
We formulate a first simple statement about the difference of the perturbed and unperturbed marginal $n$-step distributions.
\begin{theorem}\label{1th:Z_m}
Under Assumptions \eqref{IneqSimpleCont} and \eqref{IneqSimpleApprox} we have
\begin{equation} \label{eq: 1st_pert_est}
W_{d}(p_n, \widetilde{p}_n) \leq (1-\alpha)^{n} W_d(p_0,\widetilde{p}_0) + \varepsilon\, \min\{n,\alpha^{-1}\},
\end{equation}
for any $n\in\mathbb{N}$.
In particular, for stationary distributions $\pi_Q,\pi_K$ of $Q,K$, with $W_d(\pi_Q,\pi_K)<\infty$, this implies 
\begin{equation} \label{eq: 1st_pert_stat}
W_{d}(\pi_Q,\pi_K) \leq  \frac{ \varepsilon}{\alpha}.
\end{equation}
\end{theorem}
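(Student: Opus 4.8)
The plan is to reduce \eqref{eq: 1st_pert_est} to a one-step ``contraction plus error'' recursion, unroll it, and then deduce \eqref{eq: 1st_pert_stat} by a limiting argument. Write $a_n := W_d(p_n,\widetilde p_n)$ and recall that $p_{n+1}=p_nQ$, $\widetilde p_{n+1}=\widetilde p_nK$, where $\mu Q := \int_\Omega Q(x,\cdot)\,\mu(dx)$ for a probability measure $\mu$. I would first establish two auxiliary facts: (i) the contraction \eqref{IneqSimpleCont} lifts from points to measures, i.e.\ $W_d(\mu Q,\nu Q)\le (1-\alpha)\,W_d(\mu,\nu)$ for all probability measures $\mu,\nu$ on $\Omega$; and (ii) the approximation \eqref{IneqSimpleApprox} lifts as well, i.e.\ $W_d(\mu Q,\mu K)\le \varepsilon$ for all $\mu$. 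Granting (i) and (ii), the triangle inequality for $W_d$ yields
\begin{equation*}
a_{n+1}=W_d(p_nQ,\widetilde p_nK)\le W_d(p_nQ,\widetilde p_nQ)+W_d(\widetilde p_nQ,\widetilde p_nK)\le (1-\alpha)\,a_n+\varepsilon .
\end{equation*}
(One may assume $\alpha\le 1$; for $\alpha>1$ the bound \eqref{IneqSimpleCont} forces $Q(x,\cdot)$ to be independent of $x$ and everything is trivial.)

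The step (i)--(ii) is the only real obstacle, and there are two standard routes. The \emph{coupling route}: given any $\xi\in\Pi(\mu,\nu)$ and a measurably chosen family of optimal couplings $\xi_{x,y}\in\Pi(Q(x,\cdot),Q(y,\cdot))$, the measure $\int \xi_{x,y}\,\xi(dx,dy)$ lies in $\Pi(\mu Q,\nu Q)$ and has cost $\int W_d(Q(x,\cdot),Q(y,\cdot))\,\xi(dx,dy)\le (1-\alpha)\int d(x,y)\,\xi(dx,dy)$; optimizing over $\xi$ gives (i), and the same gluing with $\xi$ the diagonal coupling of $\mu$ with itself and $\xi_x\in\Pi(Q(x,\cdot),K(x,\cdot))$ optimal gives (ii). The measurable selection of the couplings $\xi_{x,y}$ on the Polish space $\Omega$ is the technical crux; it is classical but genuinely nontrivial, and this is precisely where the lower semicontinuity of $d$ from the footnote is invoked. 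The \emph{dual route} sidesteps measurable selection: by Kantorovich--Rubinstein duality (valid for the lower semicontinuous metric cost $d$) one has $W_d(\mu Q,\nu Q)=\sup\{\int Qf\,d\mu-\int Qf\,d\nu:\ f\ 1\text{-Lipschitz}\}$, with $Qf(x):=\int f\,dQ(x,\cdot)$; but $|Qf(x)-Qf(y)|\le W_d(Q(x,\cdot),Q(y,\cdot))\le (1-\alpha)d(x,y)$ shows $Qf$ is $(1-\alpha)$-Lipschitz, so each term in the supremum is at most $(1-\alpha)W_d(\mu,\nu)$, giving (i); likewise $|Qf(x)-Kf(x)|\le W_d(Q(x,\cdot),K(x,\cdot))\le\varepsilon$ pointwise gives (ii). I would use the dual route to keep the argument self-contained.

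Finally, I would unroll the recursion and pass to the stationary statement. Iterating $a_{n+1}\le (1-\alpha)a_n+\varepsilon$ gives $a_n\le (1-\alpha)^n a_0+\varepsilon\sum_{k=0}^{n-1}(1-\alpha)^k$; each term $(1-\alpha)^k$ lies in $[0,1]$, so the sum is at most $n$, and summing the geometric series it is at most $\alpha^{-1}$, whence $a_n\le (1-\alpha)^n W_d(p_0,\widetilde p_0)+\varepsilon\min\{n,\alpha^{-1}\}$, which is \eqref{eq: 1st_pert_est}. For \eqref{eq: 1st_pert_stat}, apply this with $p_0=\pi_Q$ and $\widetilde p_0=\pi_K$: then $p_n=\pi_Q$ and $\widetilde p_n=\pi_K$ for every $n$, so $a_n=W_d(\pi_Q,\pi_K)$ is constant in $n$ and
\begin{equation*}
\bigl(1-(1-\alpha)^n\bigr)\,W_d(\pi_Q,\pi_K)\le \frac{\varepsilon}{\alpha}\qquad\text{for all }n\in\mathbb{N}.
\end{equation*}
Since $W_d(\pi_Q,\pi_K)<\infty$ and $(1-\alpha)^n\to 0$, letting $n\to\infty$ yields $W_d(\pi_Q,\pi_K)\le\varepsilon/\alpha$.
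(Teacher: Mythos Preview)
Your argument is correct and follows the same overall architecture as the paper's proof: establish the one-step recursion $W_d(p_{n+1},\widetilde p_{n+1})\le (1-\alpha)W_d(p_n,\widetilde p_n)+\varepsilon$, unroll it, and then specialize to the stationary distributions. The differences are in the implementation. For the recursion, the paper takes the coupling route you describe first: it invokes a measurable-selection result (their reference to Zhang) to build a single Markov chain on $\Omega\times\Omega$ whose transition kernel is an optimal coupling of $K(x,\cdot)$ and $Q(y,\cdot)$, and then bounds $\mathbb{E}[d(X_n,\widetilde X_n)]$ directly. You instead lift \eqref{IneqSimpleCont} and \eqref{IneqSimpleApprox} to measures via Kantorovich--Rubinstein duality, which is cleaner in that it avoids the measurable-selection machinery entirely; the price is that one must have the duality theorem available under the stated lower-semicontinuity hypothesis on $d$. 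For the stationary bound, the paper applies the recursion once ($n=1$) with $p_0=\pi_Q$, $\widetilde p_0=\pi_K$ and rearranges $W_d(\pi_Q,\pi_K)\le (1-\alpha)W_d(\pi_Q,\pi_K)+\varepsilon$ directly, whereas you let $n\to\infty$; both are fine, the paper's version being marginally more direct. In short, your proof is a legitimate and arguably more self-contained variant of the paper's argument.
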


The proof is based on the simplest possible perturbation argument: coupling entire \textit{paths} of two Markov chains so that they remain close for long periods.

\begin{proof}
It is well known, see for example \cite[Theorem~20.1.1.]{douc2018markov}, that there is
an optimal coupling $\eta\in \Pi(p_0,\widetilde{p}_0)$, such that
\[
 W_d(p_0,\widetilde{p}_0) = \int_{\Omega\times\Omega} d(x,y) \eta({\rm d}x,{\rm d}y).
\]
Moreover, for any $x,y\in \Omega$ there is a transition kernel $\xi\colon \Omega^2 \times \mathcal{F}\otimes \mathcal{F} \to [0,1]$, such that $\xi(x,y,\cdot) \in \Pi(K(x,\cdot),Q(y,\cdot))$ is also an optimal coupling, i.e., 
\[
W_d(K(x,\cdot),Q(y,\cdot)) = \int_{\Omega\times \Omega} d(z_1,z_2)\,\xi(x,y,{\rm d}z_1,{\rm d}z_2),
\]
see for example \cite[Theorem~1.1]{Zh00}.  Consider a Markov chain $(X_n,\widetilde{X}_n)_{n\in\mathbb{N}_0}$ on $\Omega\times \Omega$ with transition kernel $\xi$ and initial distribution $\eta$. Observe that $(X_n)_{n\in\mathbb{N}_0}$, $(\widetilde{X}_n)_{n\in\mathbb{N}_0}$ are Markov chains with corresponding kernels $K,Q$ and initial distributions $p_0$, $\widetilde{p}_0$, respectively. Then, for $n\in\mathbb{N}$, we have
$
  W_d(p_n,\widetilde{p}_n) =  W_d(p_0K^n,\widetilde{p}_0Q^n)
   \leq \mathbb{E}(d(X_n,\widetilde{X}_n))
$
and \begin{align*}
   \mathbb{E}(d(X_n,\widetilde{X}_n))
   = &  \mathbb{E} \left[ \mathbb{E}(d(X_n,\widetilde{X}_n) \mid X_{n-1},\widetilde{X}_{n-1}) \right]
  =  \mathbb{E} \left[ W_d(K(X_{n-1},\cdot),Q(\widetilde{X}_{n-1},\cdot))\right]\\ 
  \leq & \mathbb{E} \left[ W_d(K(X_{n-1},\cdot),K(\widetilde{X}_{n-1},\cdot))\right]+\mathbb{E}\left[W_d(K(\widetilde{X}_{n-1},\cdot),Q(\widetilde{X}_{n-1},\cdot))\right]\\
  \leq & (1-\alpha) \mathbb{E}(d(X_{n-1},\widetilde{X}_{n-1})) + \varepsilon,
\end{align*}
where we used \eqref{IneqSimpleCont} and \eqref{IneqSimpleApprox} in the last inequality. Applying the formerly derived estimation iteratively yields
\[
    \mathbb{E}(d(X_n,\widetilde{X}_n)) \leq (1-\alpha)^n  \mathbb{E}(d(X_{0},\widetilde{X}_{0})) + \varepsilon \sum_{i=0}^{n-1} (1-\alpha)^i.
\]
Finally, by $\sum_{i=0}^{n-1} (1-\alpha)^i \leq \min\{n,\alpha^{-1}\}$ and $W_d(p_0,\widetilde{p}_0)=\mathbb{E}(d(X_0,\widetilde{X}_0))$ we obtain \eqref{eq: 1st_pert_est}.

To obtain \eqref{eq: 1st_pert_stat}, we apply \eqref{eq: 1st_pert_est} with $n=1$ taking starting distributions $\pi_Q,\pi_K$ into account and get
\begin{equation*}
W_{d}(\pi_Q,\pi_K) = W_{d}(\pi_Q Q, \pi_K K) \leq (1-\alpha) W_{d}(\pi_Q,\pi_K) + \varepsilon,
\end{equation*}
where rearranging gives \eqref{eq: 1st_pert_stat}.
\end{proof}

Let us note that, in the assumptions of Theorem \ref{1th:Z_m}, the roles of $Q$ and $K$ are interchangeable - the result is also true when $Q$ is the perturbed and $K$ the unperturbed transition kernel.
Qualitatively, the result regarding the difference w.r.t. the stationary distributions is typical in perturbation theory of Markov chains. Similar results can already be found in early papers in discrete state space settings \cite{kart86,Seneta91,mitrophanov2005sensitivity,PertStat13,Kartashov2013MaximalCP,CASWELL20131727} and many later papers on MCMC \cite{johndrow2017error,SchweizerPerturb18,negrea_rosenthal_2021}. We can say perturbation bounds scale \textit{at worst} like the product of the per-step perturbation error (in this case, $\varepsilon$) and the time to get close to stationarity (in this case, $\alpha^{-1}$). 

We now give a prototypical example showing how these bounds can be used and that they are sharp. In particular, we emphasize that the use of the Wasserstein distance instead of the total variation distance is beneficial.

\begin{example}[Random walk taking Euclidean distances into account] \label{ExEucDist}
Assume that there are three locations $x,\widetilde{x},y \in \mathbb{R}$ satisfying $\widetilde{x}\leq x < y$ and a probability parameter $q\in (0,1)$. We consider $\mathbb{R}$ equipped with the Euclidean metric $d$ induced by the absolute value $\vert\cdot\vert$ and set $\Omega := \{x,\widetilde{x},y\}$. The unperturbed transition kernel is given by
\[
Q(z,A) := ((1-q)\, \delta_x(A) + q\, \delta_y(A))\mathbf{1}_{z\in\{x,\widetilde{x}\}} + (q\, \delta_x(A) + (1-q)\, \delta_y(A))\mathbf{1}_{z=y},
\]
where $\delta_{\cdot}$ denotes the Dirac measure, $z\in \Omega$ and $A\subseteq\Omega$. 
Whenever $z\in\{x,\widetilde{x}\}$, then with probability $q$ you stay at or jump to $x$ and with probability $1-q$ you jump to $y$; whenever $z=y$, then with probability $1-q$ you jump to $x$ and with probability $q$ you stay at $y$. Essentially, this is a random walk between $x$ and $y$, with transient state $\widetilde{x}$. In the perturbed kernel we change the roles of $x$ and $\widetilde{x}$ in the Dirac measures, so that
\[
K(z,A) := ((1-q)\, \delta_{\widetilde{x}}(A) + q\, \delta_y(A))\mathbf{1}_{z\in\{x,\widetilde{x}\}} + (q\, \delta_{\widetilde{x}}(A) + (1-q)\, \delta_y(A))\mathbf{1}_{z=y}.
\]
The intuition is as follows: One wants to perform a random walk between $x,y$, but only a proxy from below of $x$, namely $\widetilde{x}$, is available.
Assuming without loss of generality that $\vert x-y \vert \leq \vert \widetilde{x}-y \vert$ and exploiting a well known formula\footnotemark \,  
for the Wasserstein distance on $\mathbb{R}$ one can show
\begin{align*}
    W_{d}(Q(z_1,\cdot),Q(z_2,\cdot)) 
 \leq  \vert 1-2q \vert \vert z_1 - z_2\vert, \quad z_1,z_2\in \Omega.
\end{align*}
Moreover, by similar arguments
\begin{align*}
    W_d(Q(z,\cdot),K(z,\cdot)) \leq \max\{q,1-q\}\vert x-\widetilde{x} \vert, \quad z\in \Omega.
\end{align*}
That is, the ``closer" $x$ and $\widetilde{x}$ are to each other, the ``closer" the kernels $K$ and $Q$ are to each other.
Consequently, by Theorem~\ref{1th:Z_m} we have
\[
W_d(K^n(z_1,\cdot),Q^n(z_2,\cdot)) \leq \vert1-2q\vert^n \vert z_1-z_2 \vert 
+ \frac{\max\{q,1-q\}}{\vert 1-2q\vert}\vert x-\widetilde{x} \vert.
\]
This readily implies that if the unperturbed and perturbed Markov chain start at the same point, then the difference of the distributions after $n$ steps is essentially bounded by a constant times $\vert x- \widetilde{x}\vert$. 
 Note that $\pi_Q:=1/2 \delta_x + 1/2 \delta_y$, $\pi_K := 1/2 \delta_{\widetilde{x}} + 1/2 \delta_y$ are the corresponding stationary distributions of $Q,K$. For those Theorem~\ref{1th:Z_m} yields
\begin{equation}
\label{eq: 1st_ex_stat_diff}
W_d(\pi_K,\pi_Q) \leq \frac{\max\{q,1-q\}}{1-\vert 1-2q \vert} \vert x-\widetilde{x}\vert.
\end{equation}
Again, if $x$ and $\widetilde{x}$ are ``close" to each other w.r.t. the Euclidean norm, then also the stationary distributions are ``close" to each other.
This is an advantage of the Wasserstein distance (based on $\vert \cdot \vert$) compared to $d_{TV}$. Moreover, by a straightforward direct calculation we have
 $W_d(\pi_K,\pi_Q)= \frac{1}{2} \vert x-\widetilde{x} \vert$, such that for $q=1/2$ the upper bound of \eqref{eq: 1st_ex_stat_diff} cannot be improved.
\end{example}

\footnotetext{For distributions $\mu$, $\nu$ on $\mathbb{R}$ we have $W_d(\mu,\nu)=\int_0^1 \vert F^{-1}_\mu(s)- F^{-1}_\nu(s) \vert {\rm d}s$, where $F^{-1}_\mu,F^{-1}_\nu$ denote the generalized inverse of the CDFs of $\mu,\nu$.}

Now we show that even in easy settings the uniform approximation property \eqref{IneqSimpleApprox} of Theorem~\ref{1th:Z_m} may be too restrictive.

\begin{example}[Restriction of  Theorem~\ref{1th:Z_m} (to be continued)] \label{ExLyap}
Let $\Omega=\mathbb{R}$, $d(x,y)=\vert x-y \vert$ and $\mu$ be a distribution on $\Omega$. For given real-valued random variables $X_0,\widetilde{X}_0$, 
consider the following autoregressive processes 
\begin{align*}
X_{n+1} & =(1-\alpha) X_{n} + Z_n, \qquad
\widetilde{X}_{n+1}  = (1-\widetilde{\alpha}) \widetilde{X}_n + Z_n,
\end{align*}
where $\alpha,\widetilde{\alpha}\in (0,1)$ and $(Z_n)_{n\in\mathbb{N}}$ is an iid sequence with $Z_1\sim \mu$. For the same initial state, the intuition is that as long as $\alpha$ and $\widetilde{\alpha}$ are ``close" to each other,  the Markov chains should also be ``close".
The corresponding transition kernels of the perturbed and unperturbed Markov chain are given as
\[
Q(x,A) = \int_{\mathbb{R}} \mathbf{1}_A(\alpha x+z) \mu({\rm  d}z),
\quad
K(x,A) =  \int_{\mathbb{R}} \mathbf{1}_A(\widetilde{\alpha} x+z) \mu({\rm  d}z),
\]
where $x\in\Omega$ and $A\in\mathcal{B}(\Omega)$. 
For $x,y\in \Omega$, the coupling $$\xi(x,y,{\rm d} \bar{x},{\rm d}\bar{y}) = \int_{\mathbb{R}
} \delta_{(1-\alpha) x + z}({\rm d}\bar{x}) \,\delta_{(1-\alpha) y + z}({\rm d}\bar{y}) \mu({\rm d}z)$$ is in $\Pi(Q(x,\cdot),Q(y,\cdot))$, so
\begin{equation}
\label{contra_restr_ex}
W_d(Q(x,\cdot),Q(y,\cdot)) 
\leq \int_{\mathbb{R}} \vert (1-\alpha) x+z -(1-\alpha) y - z \vert \mu({\rm d}z) = (1-\alpha) 
\vert x- y \vert.
\end{equation}
Therefore \eqref{IneqSimpleCont} is satisfied. Similarly, we obtain
\begin{equation}
\label{upper_err_bnd_restr_ex}
W_d(Q(x,\cdot),K(x,\cdot)) \leq \int_{\mathbb{R}} \vert (1-\alpha) x +z - (1-\widetilde{\alpha}) x - z\vert \mu({\rm d}z) = \vert \alpha - \widetilde{\alpha} \vert \vert x \vert.
\end{equation}
Moreover, by using the Kantorovich-Rubinstein duality we have
\[
W_d(Q(x,\cdot),K(x,\cdot)) = \sup_{\Vert f \Vert_{\rm{Lip}}\leq 1} \vert Qf(x)-Kf(x) \vert, 
\]
with $\Vert f \Vert_{\rm{Lip}}=\sup_{y,z\in \Omega, y\neq z} \frac{\vert f(y)-f(z) \vert}{\vert y-z \vert}$ for $f\colon \Omega \to \mathbb{R}$. Therefore considering $f(x)=x$ yields
\begin{equation}
\label{eq:lower_err_bnd_restr_ex}
W_d(Q(x,\cdot),K(x,\cdot)) \geq \left \vert \int_{\mathbb{R}} (1-\alpha)x+z - (1-\widetilde{\alpha})x -z  \mu({\rm d}z) \right \vert = \vert \alpha-\widetilde{\alpha} \vert \vert x\vert,
\end{equation}
such that \eqref{IneqSimpleApprox} cannot be uniformly in $x$ satisfied and Theorem~\ref{1th:Z_m} is not applicable.
\end{example}

\subsection{Basic Principles: Analyzing a Different Perturbation Assumption} \label{SecDiffPert}

In this chapter, we focus on algorithms that are best analyzed with an approach similar to Theorem \ref{1th:Z_m}. That is, these are algorithms where the best known approach to bounding asymptotic bias is to control the single-step error (as in Assumption \eqref{IneqSimpleApprox}) and add up these errors in some way that is controlled by the mixing behaviour of the chain (as in Assumption  \eqref{IneqSimpleCont}).  The basic problem with this error-propagation approach is that it is difficult to incorporate ``cancellation" of the single-step errors in an analysis. This difficulty can lead to poor estimates, especially when applied naively to multimodal targets with poor mixing properties.

Sometimes,
it is possible to analyze error bounds \textit{other than} the single-step error. One desirable approach would be to directly analyze the stationary distribution, and this is easiest when a simple formula for the stationary distribution is available. In some situations, such as approximate Bayesian computation, such a formula will always be available (see \cite{Sisson2018handbook} for details and practical advice). In other situations, it is possible to tweak an existing perturbed algorithm into a very similar one for which an exact formula for the stationary distribution is available. The most prominent such tweak is based on the pseudomarginal algorithm of \cite{pseudomarginal}. Informally, if a perturbed algorithm is obtained by replacing a quantity with a random plug-in estimator, it is possible to write down a pseudomarginal algorithm based on the same plug-in estimator. This pseudomarginal algorithm will have a simple formula for its associated stationary measure, and this stationary measure can be analyzed directly. See \cite{Quiroz19} for a prototypical example of this approach with useful estimates. A distinctive approach is found in \cite{quiroz2021block}, where the draws from the perturbed posterior are reweighted such that expectations with respect to the unperturbed target can be consistently estimated, avoiding the need of analysing the stationary perturbed measure.

This leads to a natural question: if one can \textit{typically} tweak a perturbed algorithm into a similar algorithm for which the stationary distribution is available, or if the samples can be reweighted as in \cite{quiroz2021block}, why would anyone use a perturbation analysis along the lines of Theorem \ref{1th:Z_m}? 

We believe that there are two main answers. The obvious one is that these tweaks may not be available. For example, at the time of writing there are no known tweaks for approaches such as \cite{Chen22} that are based on core sets. The second answer is that these tweaks can be computationally expensive when they are available. As discussed in the detailed theoretical analysis \cite{andrieu2022comparison}, pseudomarginal algorithms can be much slower than their ``base" algorithms and often even lack spectral gaps. These poor theoretical properties can also be evident in applications. Pseudomarginal methods can be made more efficient by correlating the estimates in the numerator and denominator as in \cite{deligiannidis2018correlated, tran2016block}. The correlated pseudomarginal method has been successfully implemented in many applications, e.g. \cite{friedli2023inference, yang2022, golightly2019correlated, wiqvist2021efficient}. However, using a deterministic approximation of the likelihood, albeit perturbed, may sometimes be more efficient as shown in \cite{friedli2023solving}. For these reasons, it remains common to run perturbed Markov chains for which the associated stationary distribution has no known simple formula.

\subsection{Basic Principles: What Is Theory For?}

It is natural to ask how theoretical results are relevant to practice. The best-case result would be to get upper bounds on the error (as in Theorem \ref{1th:Z_m}) and lower bounds that are nearly equal. We could then use these matching bounds to do hyperparameter tuning. Unfortunately, this is almost never possible for realistic Markov chains - whether in perturbation theory or more generally \cite{Honest01}. While this question is too broad 
to address here comprehensively, we give a few partial answers:

\begin{enumerate}
\item \textbf{Heuristic Tuning:} Even when error bounds don't match, it is common to optimize an upper (or lower) bound as a proxy for the true error. This tuning is rarely optimal, but can be a good starting point. Sometimes, theory is available to help \cite{MEDINAAGUAYO20202200}.
\item \textbf{Basic Confirmation:} When writing classes of MCMC algorithms that may be used by a wide audience, it is not possible to check that basic desirable properties (such as geometric ergodicity or consistency) hold by exhaustive simulation. Theoretical bounds such as Theorem \ref{1th:Z_m} give a sort of ``sanity check" by establishing that, at the very least, all algorithms in a class satisfy such basic properties. They are also often seen as weak evidence that good performance in a limited number of test cases is likely to be seen across many algorithms in the class.
 \item \textbf{Mapping the Territory:} A priori, it is often not clear what approaches to perturbed Markov chains have \textit{any} chance of producing usable algorithms. Theoretical results often provide a clear picture of how certain quantities (e.g. pointwise error) must scale with other quantities (e.g. dimension). This is our main point of view in Section \ref{SecExampleAppsGeneric}. This point of view can be quite useful in finding a good algorithm, but it occurs \textit{earlier} in the process than hyperparameter tuning. To continue our running example of subsample-based MCMC: hyperparameter tuning will tell you the best rate at which to subsample, and you can usually do this after implementing the algorithm. Basic theory will tell you that no amount of subsampling is enough without control variates, and will suggest control variates that are statistically efficient. These sorts of considerations usually happen before implementation and before hyperparameter tuning.
\end{enumerate}

\section{A Survey of Mathematical Results} \label{SecMathPert}

\subsection{A Brief History}

Perturbation theory has a long history in the analysis of matrices and linear operators, as summarized in the classic textbook \cite{Kato1995}. Of course, making stronger assumptions on the operators could give stronger perturbation bounds. Some early specializations of perturbation theory to statistics came from the Davis-Kahan theorem and extensions \cite{davis1970rotation,SamDav15}, which analyzed perturbations in $L^{2}$. While these general results have been used to analyze operators that can be viewed as transition kernels of Markov chains \cite{Lux07}, these results have not been the main approach to proving perturbation bounds for MCMC analysis. The first applications of generic perturbation bounds to Markov chains came in \cite{kart86,Seneta91}. At around the same time as these first generic perturbation bounds, similar ideas were being specialized to Markov chains appearing in queuing theory \cite{ho83,CAO1986433} as well as to Monte Carlo methods for stochastic differential equations \cite{TalayNumericSDE90}. This perturbation-based approach remains popular in the queuing theory literature (see \cite{Araman2022} and the references therein).

\subsubsection{Applications to State-Space Approximations}

The first application to the analysis of MCMC that we are aware of was \cite{BREYER2001123}, which showed that ``rounding off" numbers to machine precision introduced negligible total bias to the results of Monte Carlo calculations. However, for nearly thirty years, perturbation analysis was popular in queuing theory and other areas of applied probability, but was rarely used in the analysis of MCMC algorithms. This was changed by two major developments amongst scientists: practitioners in all areas found that popular MCMC algorithms were becoming less tractable in the face of very large datasets and complex models, and computer scientists became very interested in understanding stochastic gradient descent and other gradient-based Markov chains. This resulted 
in at least two large threads of research on perturbations in the context of MCMC: perturbations based on approximating likelihoods and perturbations based on approximating continuous-time algorithms.

Practitioners working on problems for which typical MCMC algorithms were intractable developed cheap approximations. The simplest of these were essentially ``plug-in" approximations, replacing an expensive calculation (such as \eqref{EqProdPost}) by a tractable approximation (such as \eqref{EqSimpleProxy}). In a few cases it was possible to develop auxiliary-variable tricks to render these plug-in algorithms exact (see e.g. \cite{pseudomarginal,fly15,sherlock2017pseudo}), but typically one would analyze these ``plug-in" algorithms by viewing them as perturbations of exact algorithms. This lead to a long string of papers proposing new approximate algorithms in a range of areas, including subsampling \cite{Korattikara2013AusterityIM, bardenet2014towards,pmlr-v80-desa18a,Quiroz19} especially in the case of gradient-based methods \cite{wellingSGLD11,pmlr-v32-cheni14,SHMCMCSurvey, dang2019}, expensive forward models \cite{ApproxBayes10,DodSub15,Conrad16,Conrad2016ParallelLA,stuart2018posterior,yan2019adaptive}, and ``doubly-intractable" distributions \cite{MurraryExchange06,kang2023measuring,ERGM23,Roulette15, yang2022,habeck2020stability}. Alongside the new algorithms came papers developing  basic theory \cite{mitrophanov2005sensitivity,Ferre_ledoux_2013,johndrow2017error,alquier2016noisy,medina2016stability,DALALYAN20195278,SchweizerPerturb18,MEDINAAGUAYO20202200,negrea_rosenthal_2021} and others using it to give practical advice on how to make the algorithms perform well \cite{NIPS2016_9b698eb3,chatterji2018theory,CVSGMCMC19,NEURIPS2019_c3535feb,Quiroz18} and to draw conclusions about when popular algorithms will perform poorly \cite{betancourt15,bardTall17,Nagapetyan2017,NEURIPS2018_335cd1b9,johndrow2020free}. See Section \ref{SecExampleAppsGeneric} for a discussion of the practical lessons one can learn from these results.

\subsubsection{Applications to Time Discretizations}

In an essentially-unrelated collection of research papers, theorists began applying perturbation-like results to analyze algorithms based on discrete-time approximation to continuous-time algorithms. The simplest and most famous of these algorithms is the Langevin algorithm, which we now recall. Consider a target distribution with density $e^{U(x)}$ on $\mathbb{R}^{d}$. The associated Langevin dynamics are the Ito diffusion 
\begin{equation} \label{EqLangevin}
    \dot{X} = \nabla U(X) + \sqrt{2} \dot{W},
\end{equation}
where $W$ is standard Brownian noise. This diffusion has $e^{U(x)}$ as its stationary measure, but of course sampling from the diffusion is not typically tractable. The natural Euler approximation to equation \eqref{EqLangevin} with time-step of length $\delta > 0$ is:
\begin{equation} \label{EqULA}
X_{t+1} = X_{t} + \delta \nabla U(X_{t}) + \sqrt{2 \delta} \zeta_{k},
\end{equation}
where $\zeta_{1},\zeta_{2},\ldots$ are iid standard $d$-dimensional Gaussian random variables. The dynamics in equation \eqref{EqULA} are called the \textit{unadjusted Langevin algorithm} (ULA)\footnote{It is common to add a Metropolis-Hastings correction step after applying the recursion in equation \eqref{EqULA}. The resulting algorithm, called the Metropolis-adjusted Langevin algorithm (MALA), has the original target $e^{U(x)}$ as its stationary measure. As discussed in \cite{Durmus2016HighdimensionalBI}, the faster-but-biased ULA algorithm is sometimes preferred to the slower-but-unbiased MALA algorithm.}.

There are many bounds on the error of ULA and other algorithms based on discretizations of continuous processes. Some prominent bounds on the error of ULA are the sequence of papers \cite{TalayNumericSDE90,Durmus2016HighdimensionalBI,Durmus17,DALALYAN20195278}, all of which seem to critically involve bounding the growth rate of the difference $\|X_{t} - X(\delta t)\|$ between the ULA dynamics  $X_{t}$ from equation \eqref{EqULA} and the Langevin dynamics $X(t)$ from equation \eqref{EqLangevin}. In other words, they view the dynamics of the ULA algorithm as a small perturbation of the Langevin dynamics. 

One critical ingredient in obtaining new perturbation-based bounds, which is on display in recent analyses of ULA, is to use the perturbation to bound ``less" than the full path. Very roughly, the idea is to use a perturbation bound to show that the perturbed chain itself inherits certain functional inequalities from the original chain, rather than showing that paths of the perturbed chain are in some sense close to paths of the original chain. Recall that the proof of the toy result Theorem \ref{1th:Z_m} proceeds by viewing entire paths of two Markov chains as being close. In contrast, the analysis of ULA in \cite{Bou23} views the perturbation more abstractly: it is used only to show that ULA inherits a contraction bound from the Langevin dynamics, not to bound the distance between long Markov chain paths (see Theorem 3.1). Although  \cite{Bou23} does not involve bounding distances between entire paths, the perturbation is still somewhat concrete, as it involves bounding distances between individual full steps of the algorithm.  In \cite{NEURIPS2019_65a99bb7}, it is viewed even more abstractly than this: it is used to show that ULA inherits a bound based on log-Sobolev inequalities from the Langevin dynamics (see Lemma 3). 

Although we have focused on the analysis of the non-asymptotic error of ULA, perturbation bounds are popular in other analyses of other discrete approximations to continuous algorithms. There is a similar sequence of papers analyzing the Hamiltonian Monte Carlo (HMC) algorithm introduced in \cite{Neal95}, which can be viewed as a multi-step analogue to the ULA or MALA algorithm. As in the analysis of ULA, early papers viewed the paths of standard HMC algorithms implemented with Euler or Verlet integrators as small perturbations of paths of a natural continuous HMC algorithm \cite{pmlr-v89-mangoubi19a,mangoubi2018dimensionally}, while later papers used perturbation bounds to obtain bounds on contraction rates \cite{BouEbZim20} or even more abstract quantities \cite{Chen20}. Similar progress has occurred in the analysis of other gradient-based methods; see e.g. \cite{zhang2023improved}. Recent work has taken advantage of a long line of papers from this viewpoint in the optimization literature, such as \cite{NEURIPS2019_a9986cb0}, to propose and analyze new MCMC methods \cite{Nest21}.

\subsubsection{Towards Practical Advice}

All of this mathematical discussion has avoided the key question: what practical advice can one obtain from these convergence estimates? Unfortunately, it is very difficult to use these bounds to directly compare algorithms. The basic issue is that no direct comparisons, such as Peskun ordering and related bounds (see e.g. the survey \cite{MiraOrderingSurvey01} or related methods \cite{CompExactApprox16,andrieu2022comparison}), are available for these sorts of algorithms. So far it is only possible to conclude that one algorithm is better than another by obtaining \textit{upper} bounds on the error of the first algorithm that are smaller than \textit{lower} bounds on the error of the second algorithm. Bounds that apply very broadly, such as \cite{BouEbZim20}, are likely to be quantitatively loose, and so such a comparison is nearly impossible. Of course, the fact that it is very difficult to compare MCMC algorithms by obtaining sharp convergence estimates is a well-known problem in MCMC theory \cite{jones2001honest}.

Until recently, non-asymptotic error bounds have largely been used as rough confirmation for heuristics obtained in other ways. For example, scaling limit analyses suggests that HMC would perform better than MALA on high-dimensional problems \cite{LangevinScaling12,ScalingHMC13}, and these rates are consistent with the literature on non-asymptotic error of HMC and MALA. It is also very common to compare only \textit{upper bounds} on convergence rates of two algorithms (as in the informal discussion in Section 2.4 of \cite{SHMCMCSurvey}) or only \textit{lower bounds} on convergence rates of two algorithms (as in the main conclusion of \cite{Nagapetyan2017}). As most authors are quick to point out, these sorts of comparisons are suggestive but do not allow for strong conclusions.

We are aware of only one paper that obtains rigorous comparisons of these sorts of gradient-based methods: \cite{chen2023does} obtains nearly-matching upper and lower bounds on the convergence rates of MALA and HMC under fairly strong conditions on the target distributions. These allow one to conclude a heuristic that has been well-known among practitioners for many years: HMC performs better than MALA for high-dimensional problems.

 Although the motivations of these two threads of research papers are quite different, some important connections have emerged. Some of the most interesting connections are related to stochastic gradient methods, as surveyed in \cite{CompleteRecipe15,SHMCMCSurvey}. For any algorithm that involves computing a gradient (such as MALA or HMC), it is straightforward to write down a subsampling analogue of the algorithm by replacing any appearances of the gradient by a subsample-based estimate. It is natural to analyze these algorithms as ``double perturbations" of the original diffusion, but in fact there are sometimes more effective options. For example, the stochastic gradient HMC algorithm of \cite{pmlr-v32-cheni14} is best viewed as a perturbation of the solution to the following ``underdamped" version of the Langevin diffusion \eqref{EqLangevin}:

\begin{equation*}
\dot{Z} = -\tau \, Z \, dt - \nabla U(X) dt + \sqrt{2 \tau \beta^{-1}} \dot{W}, \quad \dot{X} = Z dt,
\end{equation*}
rather than viewing it as a straightforward perturbation for the original HMC algorithm. Recognizing these alternative viewpoints leads to tighter analyses - see e.g. \cite{JMLR:v17:15-494}.

\subsection{Current State-of-the-Art}

We turn to the current state-of-the-art regarding the theory of perturbations of Markov chains.
The objects of interest are the unperturbed Markov chain
$(X_n)_{n\in\mathbb{N}_0}$ with transition kernel $Q$ and the perturbed Markov chain $(\widetilde{X}_n)_{n\in\mathbb{N}_0}$ with transition kernel $K$ on $(\Omega,\mathcal{F})$. Recall that for convenience we denote, for any $n\in\mathbb{N}_0$, by $p_n$ the distribution of $X_n$ and by $\widetilde{p}_n$ the distribution of $\widetilde{X}_n$.

Now we consider the two main threads of research regarding the aforementioned perturbation theory. The ``classical" approach involves arguments that are reminiscent of the proof of Theorem \ref{1th:Z_m}: the core of the argument is to start with a ``telescoping sum" decomposition along the lines of \cite[equation (3.3)]{SchweizerPerturb18} of the form
\begin{equation} \label{IneqTeleSum}
\widetilde{p}_n  -p_n
= \widetilde{p}_0 K^n -p_0 Q^n  
= (\widetilde{p}_0 - p_0) Q^n + \sum_{i=0}^{n-1} \widetilde{p_0} K^i (K-Q)Q^{n-i-1}
\end{equation}
in combination with the triangle inequality. Such arguments give bounds of the difference of $Q^{n}(x,\cdot)$ and $K^{n}(x,\cdot)$ in some distance, and they 
allow, under suitable assumptions, to
derive bounds on 
the corresponding stationary distribution.
In contrast to that 
the ``abstract" approach is focused on showing that $K$ inherits certain ``good" convergence properties from $Q$, then analyzing $K$ in terms of these `good' properties.

\subsubsection{The Classical Approach} \label{SecClassicalApproach}

The classical thread is dominant in the early probability literature and in the statistics literature on MCMC \cite{kart86,Seneta91,mitrophanov2005sensitivity,Ferre_ledoux_2013,PertStat13,Kartashov2013MaximalCP,CASWELL20131727,johndrow2017error,SchweizerPerturb18,negrea_rosenthal_2021}. The main improvements have come from investigating how to suitably measure the difference between perturbed and unperturbed transition kernel. There are at least two useful ingredients:

\begin{enumerate}
    \item \textbf{Wasserstein distance and other norms:} Early papers such as \cite{mitrophanov2005sensitivity,kart86} use strong quantities, such as the total variation distance or so called $V$-norms, to measure perturbations.
    Recent papers rely on the Wasserstein distance \cite{SchweizerPerturb18,hosseini2023spectral} that on the one hand also allows to treat the total variation and $V$-norm setting \cite{johndrow2017error} as special cases  and on the other hand provides flexibility in taking a metric fine structure of the underlying space into account. However,
    a notable exception is \cite{negrea_rosenthal_2021} that relies on the $L^{2}$ distance and follows an $L^2$-operator norm approach in the light of perturbation theory of linear operators on Hilbert spaces \cite{davis1970rotation}.
    \item \textbf{Drift or Lyapunov condition:} 
    It turned out to be particularly fruitful to 
    formulate a \textit{drift or Lyapunov condition} 
    regarding the perturbed transition kernel, that is, assume that there is a function $V \, : \, \Omega \to [1,\infty)$ and constants $0<\beta<1$, $0 \leq L < \infty$ such that 
    \begin{equation} \label{IneqLyapAssumption}
    (KV)(x) \leq (1-\beta) V(x) + L.
    \end{equation}
    Such a condition can always be trivially satisfied with $V\equiv 1$. In this sense the requirement is not restrictive and one may conclude that a non-trivial $V$ only adds some (maybe) useful knowledge. The main idea \cite{SchweizerPerturb18,MEDINAAGUAYO20202200,hosseini2023spectral} now is to use $V$ as weight function in the distance or norm
    that one considers to measure the difference of $K$ and $Q$. This ansatz also allows to treat  proxies on truncated state spaces \cite[Theorem~9]{MEDINAAGUAYO20202200}.
\end{enumerate}

In Example~\ref{ExEucDist} and Example~\ref{ExLyap} 
we saw already advantages of using the Wasserstein distance. In particular, the approximation error between $Q$ and $K$ in Example~\ref{ExEucDist} would have been constant if measured in the total variation norm such as in  \cite{mitrophanov2005sensitivity}. However, a proper treatment of Example~\ref{ExLyap} is still missing. We turn to this now and provide without proof a slightly simplified version of \cite[Theorem~3.1]{SchweizerPerturb18} to illustrate the benefits of the aforementioned ingredients also within this example.
\begin{theorem} \label{2pertbnd}
Assume that $Q$ satisfies the contraction property  \eqref{IneqSimpleCont} and assume that $K$ with $V\colon \Omega \to [1,\infty)$ satisfies the Lyapunov condition \eqref{IneqLyapAssumption}. Let
\[
 \varepsilon := \sup_{x\in\Omega} \frac{W_d(K(x,\cdot),Q(x,\cdot))}{V(x)}
 \qquad \text{and} \qquad
 \kappa:= \max\left\{\frac{L}{\beta}, \int_\Omega V(x) \,\widetilde p_0({\rm d}x) \right\}.
\]
Then
\[
W_d(p_0 Q^n,\widetilde{p}_0K^n) \leq (1-\alpha)^n W_d(p_0,\widetilde{p}_0) + \frac{\varepsilon \kappa}{\alpha}.
\]
Moreover, if $Q,K$ have stationary measures $\pi_Q,\pi_K$ with $W_d(\pi_Q,\pi_K)<\infty$, then 
\begin{equation} 
W_{d}(\pi_Q,\pi_K) \leq  \frac{ \varepsilon}{\alpha} \frac{L}{\beta}.
\end{equation}
\end{theorem}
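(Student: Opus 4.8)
The plan is to re-run the path-coupling argument from the proof of Theorem~\ref{1th:Z_m}, but now measuring the single-step discrepancy in a $V$-weighted fashion and using the Lyapunov condition~\eqref{IneqLyapAssumption} to keep the weights $V(\widetilde X_m)$ under control along the perturbed chain. First I would record the relevant consequence of~\eqref{IneqLyapAssumption}: iterating it gives $(K^m V)(x) \le (1-\beta)^m V(x) + L\sum_{i=0}^{m-1}(1-\beta)^i \le (1-\beta)^m V(x) + L/\beta$, hence $\widetilde p_m(V) = \widetilde p_0 K^m(V) \le (1-\beta)^m \widetilde p_0(V) + L/\beta$. Since $L/\beta \le \kappa$ and $\widetilde p_0(V) \le \kappa$, a one-line induction shows $\widetilde p_m(V) \le \kappa$ for all $m \in \mathbb{N}_0$.

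Next I would build the coupling exactly as in Theorem~\ref{1th:Z_m}: take $\eta \in \Pi(p_0,\widetilde p_0)$ optimal for $W_d$, and for each $(x,\tilde x)$ glue a $W_d$-optimal coupling of $Q(x,\cdot)$ and $Q(\tilde x,\cdot)$ with a $W_d$-optimal coupling of $Q(\tilde x,\cdot)$ and $K(\tilde x,\cdot)$ along their common marginal $Q(\tilde x,\cdot)$; the measurable selections needed are of the same type invoked in the proof of Theorem~\ref{1th:Z_m}. This yields a transition kernel $\xi(x,\tilde x,\cdot)$ with marginals $Q(x,\cdot)$ and $K(\tilde x,\cdot)$ and, by the triangle inequality for $d$ together with~\eqref{IneqSimpleCont} and the definition of $\varepsilon$,
\[
\int d(z_1,z_2)\,\xi(x,\tilde x,{\rm d}z_1,{\rm d}z_2) \;\le\; (1-\alpha)\,d(x,\tilde x) + \varepsilon\, V(\tilde x).
\]
Running $(X_n,\widetilde X_n)$ with initial law $\eta$ and transition $\xi$, the marginals are Markov chains with kernels $Q,K$ and initial laws $p_0,\widetilde p_0$; setting $a_n := \mathbb{E}[d(X_n,\widetilde X_n)]$ and conditioning on $(X_{n-1},\widetilde X_{n-1})$ gives $a_n \le (1-\alpha) a_{n-1} + \varepsilon\,\widetilde p_{n-1}(V) \le (1-\alpha) a_{n-1} + \varepsilon\kappa$. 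Iterating, using $\sum_{i\ge 0}(1-\alpha)^i = \alpha^{-1}$, $a_0 = W_d(p_0,\widetilde p_0)$ and $W_d(p_n,\widetilde p_n) \le a_n$, yields the first displayed bound.

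For the stationary bound, apply the $n$-step estimate with $p_0 = \pi_Q$, $\widetilde p_0 = \pi_K$ and $n = 1$ (using $\pi_Q Q = \pi_Q$, $\pi_K K = \pi_K$): rearranging $W_d(\pi_Q,\pi_K) \le (1-\alpha) W_d(\pi_Q,\pi_K) + \varepsilon\kappa$ gives $W_d(\pi_Q,\pi_K) \le \varepsilon\kappa/\alpha$, now with $\kappa = \max\{L/\beta,\pi_K(V)\}$. It remains to upgrade this to the claimed $\frac{\varepsilon}{\alpha}\frac{L}{\beta}$ by establishing the a priori bound $\pi_K(V) \le L/\beta$, and this is the step that needs the most care. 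Integrating the iterated drift inequality $K^m V \le (1-\beta)^m V + L/\beta$ against the invariant measure $\pi_K$ and letting $m \to \infty$ gives $\pi_K(V) \le L/\beta$ \emph{once one knows} $\pi_K(V) < \infty$; this finiteness is the only genuinely delicate point, and it follows from a standard truncation/monotone-convergence argument under the drift condition (or may simply be taken as part of the hypotheses). Everything else is a routine transcription of the proof of Theorem~\ref{1th:Z_m}.
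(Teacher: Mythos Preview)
The paper states this theorem explicitly \emph{without proof}, deferring to \cite[Theorem~3.1]{SchweizerPerturb18}, so there is no in-paper argument to compare against directly. Your proof is correct and is precisely the argument one finds in the cited reference: it is the path-coupling proof of Theorem~\ref{1th:Z_m} with the uniform bound $\varepsilon$ replaced by the $V$-weighted bound $\varepsilon\,V(\widetilde X_{n-1})$, after which the Lyapunov condition is used to show $\widetilde p_m(V)\le\kappa$ for all $m$ via the one-step induction $(1-\beta)\kappa+L\le\kappa$.

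Two minor remarks. First, the gluing construction is more than you need: as in the proof of Theorem~\ref{1th:Z_m}, you can take $\xi(x,\tilde x,\cdot)$ to be any $W_d$-optimal coupling of $Q(x,\cdot)$ and $K(\tilde x,\cdot)$ and then apply the triangle inequality for $W_d$ directly, which avoids the extra measurable-selection issue. Second, for the stationary bound the cleanest route to $\pi_K(V)\le L/\beta$ is the one-step version you implicitly have: once $\pi_K(V)<\infty$, invariance gives $\pi_K(V)=\pi_K(KV)\le(1-\beta)\pi_K(V)+L$; the finiteness itself is standard under~\eqref{IneqLyapAssumption} (e.g.\ by truncating $V$ at level $M$, applying the drift to the truncation, and letting $M\to\infty$), and you are right to flag it as the only point requiring care.
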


We want to point out that $\varepsilon$ quantitatively measures the difference between perturbed and unperturbed kernel by incorporating the function $V$ in the denominator. This allows us to treat cases where $W_d(K(x,\cdot),Q(x,\cdot))$ is not uniformly bounded in $x$. For illustrating the former theorem we reconsider Example~\ref{ExLyap}.

\begin{example}[Restriction of  Theorem~\ref{1th:Z_m} (continued)] 

We verify that the perturbed Markov chain given by $\widetilde{X}_{n+1} = (1-\widetilde \alpha) \widetilde X_n + Z_n$ satisfies \eqref{IneqLyapAssumption} for $x\mapsto V(x) := 1 + \vert x\vert$ and apply Theorem~\ref{2pertbnd}. We have
\begin{align*}
(KV)(x) =  1+ \int_{\mathbb{R}} \vert (1-\widetilde{\alpha}) x+z \vert \mu({\rm d}z)
\leq ( 1-\widetilde{\alpha}) V(x)+ \widetilde{\alpha} + \int_{\mathbb{R}} \vert z \vert \mu({\rm d}z),
\end{align*}
such that with $\beta:= \widetilde{\alpha}$ and finite $L:= \widetilde{\alpha} + \int_{\mathbb{R}} \vert z \vert \mu({\rm d}z)$ condition \eqref{IneqLyapAssumption} holds.
Then, combining \eqref{upper_err_bnd_restr_ex} and \eqref{eq:lower_err_bnd_restr_ex} yields
\[
\varepsilon = \vert \alpha-\widetilde{\alpha}\vert,
\]
which indicates that for $\alpha$ being close to $\widetilde{\alpha}$ the difference of the unperturbed and perturbed transition kernel is small. Consequently, for $p_0=\delta_x$, $\widetilde{p}_0 = \delta_y$ with $x,y\in \Omega$ Theorem~\ref{2pertbnd}
implies
\[
W_d(Q^n(x,\cdot),K^n(y,\cdot))
\leq (1-\alpha)^n \vert x-y\vert + \frac{\vert \alpha-\widetilde{\alpha}\vert}{\alpha} 
\max\left\{\frac{
\int_{\mathbb{R}} \vert z \vert \mu({\rm d}z)}{\widetilde{\alpha}}+1,\vert y \vert +1
\right\}.
\]
Moreover, for the stationary measures $\pi_Q$ and $\pi_K$ holds
\[
W_d(\pi_Q,\pi_K) \leq 
\frac{\vert \alpha-\widetilde{\alpha}\vert}{\alpha} \left(\frac{\int_{\mathbb{R}} \vert z \vert \mu({\rm d}z)}{\widetilde{\alpha}}+1 \right).
\]    
\end{example}

Intuitively, from the quantity $\varepsilon$ in the previous theorem we can conclude that the Lyapunov condition of the perturbed transition kernel allows us to identify the essential part of the state space.  Namely, in that area where the Lyapunov function is large, a more loose proxy in terms of $W_d(Q(x,\cdot),K(x,\cdot))$ is sufficient. We want to point out that the same Wasserstein distance appears in the contraction property \eqref{IneqSimpleCont} and in $\varepsilon$ of Theorem~\ref{2pertbnd}. This may be disadvantageous, for a discussion see \cite{Ferre_ledoux_2013,SchweizerPerturb18}. Regarding this issue, we mention that \cite{Ferre_ledoux_2013,SchweizerPerturb18,MEDINAAGUAYO20202200}
also provide theoretical bounds where the contraction is measured in a $V$-norm and the proximality in terms of the total variation distance.

Now we illustrate the beneficial application of Theorem~\ref{2pertbnd} within a classical Metropolis algorithm setting. 

\begin{example}[Perturbed independent Metropolis] \label{ExIMH}
    Let $\Omega=\mathbb{R}^d$ and let the Lebesgue probability density function (pdf) of the distribution of interest be $\pi$. Suppose $\pi \propto e^{U}$ for some appropriate log-density $U\colon \mathbb{R}^d \to \mathbb{R}$.
    By $\mu$ we denote the Lebesgue pdf of the proposal distribution that is independent of the current state. Then, the Markov chain based on the independent Metropolis algorithm with proposal $\mu$ takes the role of the unperturbed, ideal process. 
    With acceptance probability $a(x,y) = \min\{1,\frac{e^{U(y)}}{\mu(y)} \frac{\mu(x)}{e^{U(x)}}\}$ for $x,y\in \Omega$ the corresponding transition kernel is given by
    \[
    Q(x,A) = \int_A a(x,y) \mu(y) {\rm d }y + \mathbf{1}_A(x) \left( 1-\int_{\mathbb{R}^d} a(x,y) \mu(y) {\rm d }y \right),\quad A\in \mathcal{B}(\mathbb{R}^d). 
    \]
    We assume that the proposal and target distribution satisfy $\mu(x)/\pi(x)\geq \alpha>0$ for all $x\in \Omega$, so that by virtue of the ``smallness" derived in the proof of \cite[Theorem~2.1]{mengersen1996rates} and \cite[Lemma~18.2.7]{douc2018markov} we have
    \begin{equation}
    \label{eq:contr_IMH}
    d_{TV}(Q(x,\cdot),Q(y,\cdot)) \leq 1-\alpha, \qquad \forall x,y\in \Omega,\; x\neq y. 
    \end{equation}
    Therefore, the contraction property \eqref{IneqSimpleCont} is satisfied for $d_{TV}$. 
    
    We consider the setting where on some part of the state space, say $\widetilde{\Omega}\subseteq \Omega$, only the computation of a noisy evaluation of $U$ is available. For a real-valued random variable $\xi_{x}\sim \mathcal{N}(0,\sigma^2)$ with noise variance $\sigma^2>0$, we assume we are able to evaluate
    $\widehat{U}(x) = (U(x)+\xi_{x}) \mathbf{1}_{x\in\widetilde{\Omega}} + U(x)\mathbf{1}_{x\not\in\widetilde{\Omega}} $, i.e., 
    $\xi_{x}$ models the noise for the evaluation on $\widetilde{\Omega}$. 
    Now a transition from $x\in \Omega$ to $y\in \Omega$ of the perturbed independent Metropolis algorithm that we consider works as follows:
    \begin{enumerate}
        \item Propose $Z\sim \mu$, call the result $z\in \Omega$;
        \item Compute $\widehat{U}(x)$ and $\widehat{U}(z)$ independently;
        \item With probability
        $
            \widehat a(x,z) := \min\left\{1, \frac{e^{\widehat{U}(z)}}{\mu(z)} \frac{\mu(x)}{e^{\widehat{U}(x)}}\right\},
        $
        accept $z$, i.e., return $y:=z$, and with probability $1-\widehat{a}(x,z)$ reject, i.e., return $y:=x$.
    \end{enumerate}
    The corresponding transition kernel is given by
    \[
    K(x,A) = \int_A 
    \mathbb{E}[\widehat{a}(x,y)] \mu(y) {\rm d }y + \mathbf{1}_A(x) \left( 1-\int_{\mathbb{R}^d} \mathbb{E}[\widehat{a}(x,y)] \mu(y) {\rm d }y \right),\quad A\in \mathcal{B}(\mathbb{R}^d). 
    \]
    A straightforward calculation reveals (cf. \cite[Lemma~4.1]{SchweizerPerturb18} or \cite[Proposition~12]{MEDINAAGUAYO20202200}) that
    \begin{equation}
    \label{eq:d_tv_diff_acc_prob}
    d_{TV}(Q(x,\cdot),K(x,\cdot)) 
    \leq 2 \int_{\Omega} \vert a(x,y)-\mathbb{E}[\widehat{a}(x,y)] \vert \mu(y) {\rm d}y.
    \end{equation}
    This indicates that the difference of the (expected) acceptance probabilities is essential.
    \begin{lemma}
        For any $x,y\in \Omega$ we have 
        \[
            \vert a(x,y) - \mathbb{E}[\widehat{a}(x,y)] \vert \leq  
            \begin{cases}
                0, & x,y\not\in \widetilde{\Omega},\\
                \sigma^2 e^{\sigma^2},& x\in\widetilde{\Omega}\;\;\text{or}\;\; y\in\widetilde{\Omega}.
            \end{cases}
        \]
    \end{lemma}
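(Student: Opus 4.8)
The plan is to peel off the trivial case and then reduce the estimate to a one-dimensional comparison for the centered lognormal factor that the noise injects into the acceptance ratio. If $x,y\notin\widetilde\Omega$, then by construction $\widehat U=U$ on $\{x,y\}$, so $\widehat a(x,y)=a(x,y)$ almost surely and the left-hand side vanishes. Otherwise, write $r=r(x,y):=\frac{\mu(x)e^{U(y)}}{\mu(y)e^{U(x)}}$, so that $a(x,y)=\min\{1,r\}$, and set $\Xi:=\xi_y\mathbf 1_{y\in\widetilde\Omega}-\xi_x\mathbf 1_{x\in\widetilde\Omega}$, so that $\widehat a(x,y)=\min\{1,re^{\Xi}\}$. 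Since $\xi_x,\xi_y$ are independent $\mathcal N(0,\sigma^2)$, the variable $\Xi$ is a centered Gaussian whose variance is $\sigma^2$ if exactly one of $x,y$ lies in $\widetilde\Omega$ and $2\sigma^2$ if both do; in particular $\operatorname{Var}(\Xi)\le 2\sigma^2$ in the nontrivial branch. It thus suffices to bound $\bigl|\min\{1,r\}-\mathbb E[\min\{1,re^{\Xi}\}]\bigr|$ uniformly in $r>0$.

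For this I would use two elementary properties of $g(u):=\min\{1,u\}$ on $[0,\infty)$: it is concave, and it satisfies $g(ru)\le g(r)u$ for $u\ge 1$ and $g(ru)\ge g(r)u$ for $u\le 1$. For the upward direction, Jensen's inequality gives $\mathbb E[g(re^{\Xi})]\le g\bigl(r\,\mathbb E[e^{\Xi}]\bigr)=g\bigl(re^{\operatorname{Var}(\Xi)/2}\bigr)$, and then $g(re^{\operatorname{Var}(\Xi)/2})-g(r)\le g(r)\bigl(e^{\operatorname{Var}(\Xi)/2}-1\bigr)\le e^{\operatorname{Var}(\Xi)/2}-1\le\tfrac12\operatorname{Var}(\Xi)\,e^{\operatorname{Var}(\Xi)/2}\le\sigma^{2}e^{\sigma^{2}}$, using $e^{t}-1\le t\,e^{t}$ and $\operatorname{Var}(\Xi)\le 2\sigma^{2}$; this is the source of the quantity in the statement. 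For the downward direction I would use the pointwise bound $g(re^{\Xi})\ge g(r)+\min\{\Xi,0\}$ -- obtained from the elementary fact $g(re^{\Xi})\ge g(r)\,e^{\min\{\Xi,0\}}$ together with $e^{s}\ge 1+s$ and $g(r)\le 1$, via a short case split on the signs of $\Xi$ and $\log r$ -- and take expectations to get $a(x,y)-\mathbb E[\widehat a(x,y)]\le\mathbb E[\Xi^{-}]=\tfrac12\mathbb E|\Xi|$, a standard half-Gaussian moment. Combining the two directions gives a two-sided bound of the type stated.

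The routine parts are the trivial case, the representation $\widehat a(x,y)=\min\{1,re^{\Xi}\}$, and the Jensen step, which cleanly produces $\sigma^{2}e^{\sigma^{2}}$. I expect the delicate point to be the downward direction: unlike the Jensen bound it is not automatically one-sided, the pointwise comparison has to be pushed through each sign regime of $\Xi$ and $\log r$ without losing a constant, and one must check that the half-Gaussian moment $\tfrac12\mathbb E|\Xi|$ is indeed dominated by $\sigma^{2}e^{\sigma^{2}}$ over the relevant range of $\sigma$. Conceptually, however, the whole estimate rests on the single observation that the perturbation enters the kernel only through the centered multiplicative factor $e^{\Xi}$ in the acceptance probability, with $\operatorname{Var}(\Xi)\le 2\sigma^{2}$.
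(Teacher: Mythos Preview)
Your upward direction via Jensen is exactly the paper's argument and delivers the constant $\sigma^{2}e^{\sigma^{2}}$ cleanly. The gap is in the downward direction. Your pointwise inequality $g(re^{\Xi})\ge g(r)+\min\{\Xi,0\}$ is correct, but after taking expectations it only yields
\[
a(x,y)-\mathbb{E}[\widehat a(x,y)]\ \le\ \tfrac12\,\mathbb{E}|\Xi|\ =\ \sqrt{\tfrac{\operatorname{Var}(\Xi)}{2\pi}}\ \le\ \frac{\sigma}{\sqrt{\pi}},
\]
which is of order $\sigma$, not $\sigma^{2}$. For small $\sigma$ you have $\sigma/\sqrt{\pi}\gg\sigma^{2}e^{\sigma^{2}}$ (e.g.\ at $\sigma=0.1$ the left side is about $0.056$ while the right is about $0.010$), so the claimed inequality $\tfrac12\mathbb{E}|\Xi|\le\sigma^{2}e^{\sigma^{2}}$ simply fails. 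The ``delicate point'' you flag is not just delicate---it does not go through, and no amount of case-splitting on the signs of $\Xi$ and $\log r$ will rescue an additive bound of this type, because you have already thrown away the factor $g(r)\le 1$ that could carry the smallness.

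The paper handles the lower direction by a different, multiplicative, device: from $\min\{1,c_1c_2\}\ge\min\{1,c_1\}\min\{1,c_2\}$ one gets
\[
\mathbb{E}[\widehat a(x,y)]\ \ge\ a(x,y)\,\mathbb{E}\bigl[\min\{1,e^{\Xi}\}\bigr],
\]
so that the downward error is at most $a(x,y)\bigl(1-\mathbb{E}[\min\{1,e^{\Xi}\}]\bigr)$, and the paper then argues this is nonpositive, making the whole estimate one-sided (so only the Jensen bound matters). That multiplicative splitting is the missing idea in your proposal: it keeps the factor $a(x,y)$ in front and reduces the problem to controlling $\mathbb{E}[\min\{1,e^{\Xi}\}]$ alone, rather than a bare half-Gaussian moment.
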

    \begin{proof}
        Set $r(x,y):= \frac{e^{U(y)}}{\mu(y)} \frac{\mu(x)}{e^{U(x)}}$ and let $\xi_{x},\eta\sim \mathcal{N}(0,\sigma^2)$ be independent normally distributed real-valued random variables. 
        
        On the one hand, 
        by the concavity of $t\mapsto \min\{1,t\}$ with Jensen's inequality we have
        \begin{align*}
            \mathbb{E}[\widehat{a}(x,y)] \leq \min\left\{ 
            1,r(x,y)\mathbb{E}[e^{\xi_{x}\mathbf{1}_{y\in\widetilde{\Omega}}-\eta \mathbf{1}_{x\in\widetilde{\Omega}}}]\right\}.
        \end{align*}
        Observe that for $x,y\not\in \widetilde{\Omega}$ holds
        $ \mathbb{E}[\widehat{a}(x,y)] \leq \alpha(x,y)$. 
        In the case $x\in\widetilde{\Omega}$ or $y\in\widetilde{\Omega}$, 
        by the independence of the random variables, the expectations inside the minimum are in each subcase taken w.r.t. a lognormal distribution, such that $ \mathbb{E}[\widehat{a}(x,y)] \leq \min\{1,r(x,y) e^{\sigma^2}\} \leq a(x,y) e^{\sigma^2}.$ Here the last inequality follows by the fact that $\min\{1,c_1 c_2\} \leq \min\{1,c_1\} c_2$ for any $c_1\geq0$ and $c_2\geq 1$.

        On the other hand, by exploiting $\min\{1,c_1 c_2\} \geq \min\{1,c_1\} \min\{1,c_2\}$ for any $c_1,c_2\geq0$ we have
        \[
            \mathbb{E}[\widehat{a}(x,y)] \geq a(x,y) \mathbb{E}[\min\{1,e^{\xi_{x}\mathbf{1}_{y\in\widetilde{\Omega}}-\eta \mathbf{1}_{x\in\widetilde{\Omega}}}\}]
            \geq a(x,y).
        \]
        Here the last inequality follows by the fact that $\mathbb{E}[\min\{1,e^Z\}] = 1/2(1+e^{v^2/2})\geq 1$ for any $Z\sim\mathcal{N}(0,v^2)$ with $v^2>0$. 

        Therefore, taking both inequalities together yields
        \begin{align*}
            \vert a(x,y) - \mathbb{E}[\widehat{a}(x,y)] \vert & \leq \mathbb{E}[\widehat{a}(x,y)] - a(x,y)  \leq 0\cdot \mathbf{1}_{x,y\not\in\widetilde{\Omega}} + (e^{\sigma^2}-1)a(x,y)\mathbf{1}_{x\in\widetilde{\Omega}
            \lor
            y\in\widetilde{\Omega}}\\
            & \leq 
            \sigma^2 e^{\sigma^2}\mathbf{1}_{x\in\widetilde{\Omega}
            \lor
            y\in\widetilde{\Omega}},
        \end{align*}
    which finishes the proof.
    \end{proof}

    By the former lemma and \eqref{eq:d_tv_diff_acc_prob} we have
    \begin{equation}
    \label{eq:ex_proxy}
     d_{TV}(Q(x,\cdot),K(x,\cdot)) \leq 2 \sigma^2 e^{\sigma^2} \left( \mathbf{1}_{x\in\widetilde{\Omega}}+\mu(\widetilde{\Omega}) \mathbf{1}_{x\not\in \widetilde{\Omega}} \right)
     \leq 2\sigma^2 e^{\sigma^2},
    \end{equation}
    such that a direct application of Theorem~\ref{2pertbnd} with $V\equiv 1$,
    $\varepsilon\leq 2\sigma^2 e^{\sigma^2}$, $\kappa=1$ and initial distributions, for example $p_0=\widetilde{p}_0=\mu$, gives
    \begin{equation}
    \label{eq:IMH_1st_est}
    d_{TV}(\mu Q^n,\mu K^n) \leq \frac{2\sigma^2 e^{\sigma^2}}{\alpha}.
    \end{equation}
    This is ``nice" in the sense that, if the variance $\sigma^2$ of the noise goes to zero, then the difference of the distributions of the $n$th step of the unperturbed and perturbed Markov chain also goes to zero.
    However, at this point the estimate cannot capture that the noise only appears on $\widetilde{\Omega}$. We see how a non-trivial Lyapunov function can help.
    \begin{lemma}
        Recall that we assume $\mu(x)/\pi(x) \geq \alpha>0$ for all $x\in\Omega$. Let $V(x) = 1+ R \mathbf{1}_{x\in \widetilde{\Omega}}$ for arbitrary $R\geq 0$. Then
        \[
        KV(x) \leq (1-\alpha) V(x) + \alpha +R \mu(\widetilde{\Omega}).
        \]
    \end{lemma}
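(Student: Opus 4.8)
The plan is to compute $KV(x)$ explicitly and then reduce the claimed drift inequality to the single fact that the expected acceptance rate of the perturbed independent sampler is at least $\alpha$ everywhere. Since $V(x) = 1 + R\,\mathbf{1}_{x\in\widetilde{\Omega}}$, plugging $V$ into the definition of $K$ and separating the ``jump'' and ``hold'' parts gives
\[
KV(x) = 1 + R\int_{\widetilde{\Omega}}\mathbb{E}[\widehat{a}(x,y)]\mu(y)\,{\rm d}y + R\,\mathbf{1}_{x\in\widetilde{\Omega}}\Bigl(1-\int_{\mathbb{R}^d}\mathbb{E}[\widehat{a}(x,y)]\mu(y)\,{\rm d}y\Bigr),
\]
because the constant part of $V$ is preserved by $K$ (i.e. $K\mathbf{1}=\mathbf{1}$) and the $R\,\mathbf{1}_{\cdot\in\widetilde{\Omega}}$ part only contributes when the chain jumps into $\widetilde{\Omega}$ or holds at a point of $\widetilde{\Omega}$. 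Writing $\rho(x):=\int_{\mathbb{R}^d}\mathbb{E}[\widehat{a}(x,y)]\mu(y)\,{\rm d}y$ for the overall acceptance probability from $x$, and using $\widehat{a}\le 1$ to bound $\int_{\widetilde{\Omega}}\mathbb{E}[\widehat{a}(x,y)]\mu(y)\,{\rm d}y\le\mu(\widetilde{\Omega})$, one obtains $KV(x)\le 1 + R\mu(\widetilde{\Omega}) + R\,\mathbf{1}_{x\in\widetilde{\Omega}}\,(1-\rho(x))$.

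Next I would establish the key estimate $\rho(x)\ge\alpha$ for all $x\in\Omega$. From the ``on the other hand'' part of the proof of the preceding lemma we already have $\mathbb{E}[\widehat{a}(x,y)]\ge a(x,y)$, so $\rho(x)\ge\int_{\mathbb{R}^d} a(x,y)\mu(y)\,{\rm d}y$. Writing $\pi = e^{U}/Z$ so that $r(x,y) = \frac{\pi(y)\mu(x)}{\mu(y)\pi(x)}$, one has $a(x,y)\mu(y) = \min\{\mu(y),\tfrac{\mu(x)}{\pi(x)}\pi(y)\}\ge\min\{\mu(y),\alpha\pi(y)\} = \alpha\pi(y)$, where the first inequality uses $\mu(x)/\pi(x)\ge\alpha$ and monotonicity of $\min$, and the last equality uses $\mu(y)\ge\alpha\pi(y)$; integrating in $y$ and using $\int\pi=1$ yields $\rho(x)\ge\alpha$. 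This is essentially the same minorization-at-level-$\alpha$ that underlies the contraction \eqref{eq:contr_IMH}.

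Finally I would combine the pieces: for $x\in\widetilde{\Omega}$ we get $1-\rho(x)\le 1-\alpha$, hence $KV(x)\le 1 + R\mu(\widetilde{\Omega}) + R(1-\alpha)\,\mathbf{1}_{x\in\widetilde{\Omega}}$, while for $x\notin\widetilde{\Omega}$ the indicator term vanishes and the same bound holds a fortiori. It then remains to check that $1 + R\mu(\widetilde{\Omega}) + R(1-\alpha)\,\mathbf{1}_{x\in\widetilde{\Omega}}$ equals $(1-\alpha)V(x) + \alpha + R\mu(\widetilde{\Omega})$, which is immediate on expanding $(1-\alpha)V(x)=(1-\alpha)(1+R\,\mathbf{1}_{x\in\widetilde{\Omega}})$. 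The only step that is not pure bookkeeping is the lower bound $\rho(x)\ge\alpha$; everything else is routine algebra with the explicit form of $K$ and $V$, and the $R=0$ case is of course trivial.
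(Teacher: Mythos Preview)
Your proposal is correct and follows essentially the same route as the paper's proof: both arguments reduce to bounding $K(x,\widetilde{\Omega})$ by using $\mathbb{E}[\widehat{a}(x,y)]\le 1$ on the jump part and the minorization $\int_{\mathbb{R}^d} a(x,y)\mu(y)\,{\rm d}y\ge\alpha$ (equivalently your $\rho(x)\ge\alpha$, via $\mathbb{E}[\widehat{a}]\ge a$) on the hold part, then doing the same algebraic rearrangement into $(1-\alpha)V(x)+\alpha+R\mu(\widetilde{\Omega})$. The only cosmetic difference is that the paper writes the acceptance integral as $\int \min\{\mu(y)/\pi(y),\mu(x)/\pi(x)\}\pi(y)\,{\rm d}y\ge\alpha$ whereas you phrase the same computation as $a(x,y)\mu(y)\ge\alpha\pi(y)$ pointwise.
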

    \begin{proof}
        By the fact that $\alpha$ is a lower bound of a ratio of pdfs we have $\alpha\leq 1$.
        Moreover, by $a(x,y) \leq \mathbb{E}[\widehat{a}(x,y)] \leq 1$ we have
        \begin{align*}
            K(x,\widetilde{\Omega
            }) \leq \mu(\widetilde{\Omega}) + \left(1-\int_{\mathbb{R}^d} 
            a(x,y) \mu(y) {\rm d}y\right)   \mathbf{1}_{x\in\widetilde{\Omega}} 
            \leq \mu(\widetilde{\Omega}) + \left(1-\alpha\right)   \mathbf{1}_{x\in\widetilde{\Omega}},
        \end{align*}
        where we used  $\int_{\mathbb{R}^d} \min\big\{1,\frac{\pi(y)}{\mu(y)} \frac{\mu(x)}{
        \pi(x)} \big\}\mu(y) {\rm d }y = \int_{\mathbb{R}^d} \min\big\{\frac{\mu(y)}{\pi(y)}, \frac{\mu(x)}{
        \pi(x)} \big\}\pi(y) {\rm d }y \geq \alpha$. With this at hand we deduce
        \[
        KV(x) = (1-\alpha)[1+R\mathbf{1}_{x\in\widetilde{\Omega}}] + \alpha + R[K(x,\widetilde{\Omega})-(1-\alpha)\mathbf{1}_{x\in\widetilde{\Omega}}]
        \leq (1-\alpha) V(x) + \alpha + R \mu(\widetilde{\Omega})
        \]
        and the proof is completed.
    \end{proof}
    We aim to apply Theorem~\ref{2pertbnd} by employing the Lyapunov function
    $V(x) = 1 + \mu(\widetilde{\Omega})^{-1} \mathbf{1}_{x\in \widetilde{\Omega}}$, i.e., we set $R=\mu(\widetilde{\Omega})^{-1}$. As before let $p_0=\widetilde{p}_0 = \mu$, so that $\kappa = 1+\alpha^{-1}$. By exploiting the middle estimation in \eqref{eq:ex_proxy} we obtain
    \begin{align*}
    \varepsilon & = \max\Big\{ \sup_{x\in \widetilde{\Omega}}\frac{d_{TV}(Q(x,\cdot),K(x,\cdot))}{V(x)},\sup_{x\in \Omega\setminus\widetilde{\Omega}}\frac{d_{TV}(Q(x,\cdot),K(x,\cdot))}{V(x)}\Big\} \\
    &\leq 2\sigma^2 e^{\sigma^2} 
    \max\{(1+\mu(\widetilde{\Omega})^{-1})^{-1},\mu(\widetilde{\Omega})\}
    = 2 \sigma^2 e^{\sigma^2} \mu(\widetilde{\Omega}).
    \end{align*}
    Therefore, we have
    \[
    d_{TV}(\mu Q^n,\mu K^n) \leq \frac{2 \sigma^2 e^{\sigma^2} \mu(\widetilde{\Omega})}{\alpha} \Big(1+\frac{1}{\alpha}\Big). 
    \]
    The advantage of the former estimate compared to \eqref{eq:IMH_1st_est} lies in the fact that the ``size" of $\widetilde \Omega$ appears. Namely, if the part of the state space $\widetilde\Omega$, where only noisy evaluations of the log-likelihood are available, is ``small" in the sense that $\mu(\widetilde\Omega)$ is ``small", then the difference between the $n$th step distributions of perturbed and unperturbed Markov chain is also ``small".   
\end{example}

\subsubsection{The Abstract Approach}

The ``abstract" or ``inherited" approach uses perturbation bounds to show that the perturbed Markov chain with transition kernel  $K$ inherits some ``good" property from the unperturbed chain with kernel $Q$. We give a simple scenario from the classical literature. 
Suppose we have \eqref{IneqSimpleCont} and \eqref{IneqSimpleApprox} with respect to the total variation distance, that is, $Q$ satisfies
\begin{equation} \label{IneqContTV}
\sup_{x,y \in \Omega} d_{TV}(Q(x,\cdot), Q(y,\cdot)) \leq 1-\alpha
\end{equation}
for some $\alpha>0$ and
$K$ is a ``small" perturbation of $Q$ in the sense
\begin{equation}    
\label{IneqAppTV}
\sup_{x \in \Omega} d_{TV}(K(x,\cdot),Q(x,\cdot)) \leq \varepsilon,
\end{equation}
with $2\varepsilon <\alpha$.
Then, the triangle inequality gives the contraction of $K$ by
\begin{equation*}
\sup_{x,y \in \Omega} d_{TV}(K(x,\cdot), K(y,\cdot)) \leq 1-\alpha+2\varepsilon<1,
\end{equation*}
which implies the existence of a stationary distribution $\pi_K$ of $K$ and geometric convergence of $K^n(x,\cdot)$ to $\pi_K$ in the total variation distance. 

We illustrate the benefits of this simple estimate in combination with Theorem~\ref{2pertbnd} in the independent Metropolis setting.

\begin{example}[Perturbed independent Metropolis (continued)] \label{ExIMH_inheritance}
 We consider the same setting and assumptions as in Example~\ref{ExIMH}. Note that \eqref{IneqContTV} is satisfied by \eqref{eq:contr_IMH} and by \eqref{eq:ex_proxy} with
$\varepsilon := 2\sigma^2e^{\sigma^2}$ the condition \eqref{IneqAppTV} holds for $0\leq \sigma^2 <\alpha/12$. 

This gives the existence of a stationary distribution
$\pi_K$ of the perturbed independent Metropolis kernel $K$. Recall that by construction the stationary distribution of $Q$ is $\pi$, such that
Theorem~\ref{2pertbnd} yields
\[
    W_d(\pi,\pi_K) \leq 
    \frac{2 \sigma^2 e^{\sigma^2} \mu(\widetilde{\Omega})}{\alpha} \Big(1+\frac{1}{\alpha}\Big). 
\]
\end{example}

In summary, the simple scenario above develops that $K$ inherits the TV contraction property \eqref{IneqContTV} from $Q$, with all its benefits, though with a
worse constant. More broadly, we might hope that $K$ inherits any ``good" contractive
property of $Q$ under some small-perturbation condition. 

Less restrictive and more involved estimates with respect to the Wasserstein distance can be found in \cite[Section~2.3]{EberleMaj19}. 
In general the statistics literature has many ``inheritance" properties similar to \eqref{IneqContTV}, see e.g. \cite[Section~3]{medina2016stability} or \cite[Theorem 3.12]{negrea_rosenthal_2021}. One distinction of recent work in computer science is that they focus on inheritance of more abstract quantities under more situation-specific conditions. For example, \cite[Lemma 6]{Chen20} uses a perturbation bound to eventually show that the perturbed kernel satisfies bounds on its conductance profile, given as
\begin{equation*}
    \sup_{A \in \mathcal{A}} \frac{\int_{A}K(x,A^c) \pi_K({\rm d}x)}{\pi_K(A)}
\end{equation*}
for carefully-chosen collections of sets $\mathcal{A}$. Beyond the conductance profile, papers such as \cite{NEURIPS2019_65a99bb7,zhang2023improved} use perturbation results to obtain bounds on log-Sobolev inequalities. A guide to this sort of ``inherited property" analysis is beyond the scope of this chapter; see \cite{zhang2023improved}.

\section{A Survey of Applications} \label{SecExampleAppsGeneric}

We go over the topics from Section \ref{SecTaxonomy} in the same order, with some further details.

\subsection{Subsampling and Approximations for Simple Targets} \label{SecSubsSimpleLong}

One of the most popular applications of perturbation theory is to algorithms that split their datasets - either via subsampling or parallelization. These algorithms implicitly assume that it is possible to approximate the target posterior distribution with a subsample of the full dataset. In the simplest case, such as the product-form distribution in equation \eqref{EqProdPost}, there is a very natural choice - just take a product over a subsample of the datapoints as in equation \eqref{EqSimpleProxy}). Focusing on the simple and overwhelmingly popular case of subsampling methods for product-form targets, a few important lessons have emerged: 

\begin{itemize}
    \item \textbf{Naive Subsampling Suffices for Optimization and Consistent Estimation:} It is not always necessary to run an MCMC chain whose stationary distribution is very close to the nominal target - sometimes a very rough approximation suffices. This situation is very common in the context of using Markov chains for approximate optimization, as in e.g. \cite{SampleOpt19}. In this context, naive subsampling can suffice to get consistent estimates in a weak norm such as Wasserstein \cite{pmlr-v80-desa18a}. The popularity of stochastic gradient Langevin dynamics, which is known to converge to a distribution that is far from the nominal target \cite{Nagapetyan2017}, shows that these ``large-perturbation" chains are very important in practice.
    \item \textbf{Naive Subsampling Does Not Suffice for Accurate Posterior Samples:} If you do want to sample from a stationary measure that is close to the posterior distribution of interest, naive subsampling will typically fail. The details of the failure will depend on the details of the algorithm choice, but typically you will either fail to get a speedup or fail to sample from a distribution that is close to the desired target. See \cite{betancourt15,bardTall17,Nagapetyan2017,NEURIPS2018_335cd1b9,johndrow2020free} for typical examples in very different contexts.
    \item \textbf{Careful Choice of Control Variates Lead to Real Improvement:} Despite the previous point, not all hope is lost. By a combination of careful choice of algorithm and pre-computed control variates, it is possible to obtain both substantial speedups and accurate samples \cite{bardTall17, pass2017, Quiroz19,CVSGMCMC19}.
    \end{itemize}

As a consequence of these three points, most applied research on subsampling MCMC for Bayesian computation focuses on the choice of control variates which we now discuss in some depth and demonstrate using the methodology developed in \cite{Quiroz19}; see \cite{Quiroz18} for a more detailed discussion.

Let $S=\{I_1,\dots,I_k\}$ be a multi set, where $I_j$ are i.i.d. uniform samples from $\{1,\dots,n\}$. Instead of a product-form estimator of \eqref{EqProdPost} as in \eqref{EqNoProxy} or \eqref{EqSimpleProxy}, \cite{Quiroz19} estimates the log-posterior 
\begin{equation} \label{EqSumLogPost}
\log f(\theta | X) = \log f(\theta) +  \sum_{i=1}^{n} \log f(X_{i}|\theta)
\end{equation}
by fixing a collection of control variates $q(X_i|\theta)$ and using a ``sum-form" estimator
\begin{align} \notag
\widehat{\log f(\theta | X)} & = 
\log f(\theta) + q(\theta) + \frac{n}{k}\sum_{j=1}^k (\log f(X_{I_j}|\theta) - q(X_{I_j}|\theta))\\
\label{EqEstSumLogPost}
& =
\log f(\theta) + q(\theta) + \frac{n}{|S|}\sum_{i \in S}(\log f(X_{i}|\theta) - q(X_{i}|\theta)),
\end{align}
where $q(\theta)=\sum_{i=1}^n q(X_i|\theta)$. The third term in \eqref{EqEstSumLogPost} is the simple random sample unbiased estimator of $\log f(\theta|X)-q(\theta)$, and hence \eqref{EqEstSumLogPost} is an unbiased estimator of \eqref{EqSumLogPost}. Note that $q(X_i|\theta)$ should be chosen so that $q(\theta)$ can be evaluated at a cost less than $\mathcal{O}(n)$, otherwise, the estimator in \eqref{EqEstSumLogPost} defeats the purpose of subsampling. \cite{bardTall17} proposes to set $q(X_i|\theta)$ as the second-order Taylor expansion of $\log f(X_i|\theta)$ around the posterior mode and show that $q(\theta)$ can be computed at an $\mathcal{O}(1)$ cost in each MCMC iteration, provided some summary statistics over the full dataset are pre-computed. Thus, the estimator in \eqref{EqEstSumLogPost} has an $\mathcal{O}(|S|)$ cost.

It is well-known that without control variates, i.e., if $q(X_i|\theta)=0$, the simple random sampling approach above is highly inefficient as the $\log f(X_i|\theta)$ vary substantially across the population. The role of $q(X_i|\theta)$ is thus to homogenize the sampling population $d_i(\theta)=\log f(X_{i}|\theta) - q(X_{i}|\theta)$, $i=1,\dots,n$. 
The variance of $\widehat{\log f(\theta | X)}$ in \eqref{EqEstSumLogPost} is
\begin{equation} \label{EqVarLogLEst}
    \sigma^2_{\widehat{\log f}} = \frac{n^2}{|S|}\sigma_d^2, \,\, \text{where } \sigma_d^2 = \frac{1}{n}\sum_{i=1}^n \left(d_i(\theta) - \overline{d}(\theta)\right)^2 \,\, \text{with } \overline{d}(\theta) = \frac{1}{n}\sum_{i=1}^n d_i(\theta).
\end{equation}
As we will see, keeping this variance sufficiently small is crucial for a successful implementation of subsampling using a pseudomarginal approach. 

\begin{figure}
    \centering    \includegraphics[width=\linewidth]{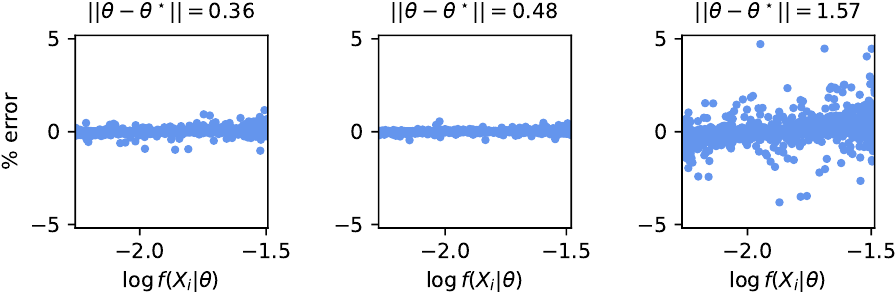}
    \caption{Percentage error for the second-order Taylor control variate approximation $q(X_i|\theta)$ of $\log f(X_i|\theta)$, $i=1,\dots,n$, for three values of $\theta$ in Example \ref{ExSimpleSubsampling}. The $\theta$ values are determined by running a long MCMC chain on the full dataset and using a kernel density estimator to determine the 25\% and 50\% sets of the samples with the highest posterior density. The sample with the largest deviation from the posterior mode $\theta^\star$ within each of the two sets is selected: that of the 25\% and 50\% sets are used in the left and middle panel, respectively. The right panel shows the result using the sample (from the 100\% set) with the largest deviation from $\theta^\star$.} \label{fig:2nd_order_von_Mises_control_variates}
\end{figure}

Assuming that 
$\widehat{\log f(\theta | X)}\sim \mathcal{N}(\log f(\theta|X),\sigma^2_{\widehat{\log f}})$ for \eqref{EqEstSumLogPost},
which can be justified by the central limit theorem, we can use the properties of log-normal variables to show that
\begin{equation} \label{EqUnbiasedEst}
 \exp\left(\widehat{\log f(\theta|X)} - \frac{\sigma^2_{\widehat{\log f}}}{2}  \right)
\end{equation}
is an unbiased estimator of \eqref{EqProdPost}, an idea first explored in \cite{ceperley1999penalty}. There are at least two objections to this analysis. First, 
$\widehat{\log f(\theta | X)}$
is not necessarily normal. Second, and more crucially, computing $\sigma^2_{\widehat{\log f}}$ requires the full dataset (see \eqref{EqVarLogLEst}), hence defeating the purpose of subsampling. \cite{Quiroz19} proposes a pseudomarginal algorithm that uses the likelihood estimator in \eqref{EqUnbiasedEst} with the second-order Taylor control variates expanded around the posterior mode, however, with an estimated variance $\widehat{\sigma}^2_{\widehat{\log f}}$ (computed from the subsample only). Since the resulting estimator in \eqref{EqUnbiasedEst} is not unbiased for \eqref{EqProdPost}, the pseudomarginal does not produce samples from $f(\theta|X)$. \cite{Quiroz19} shows that the algorithm produces samples from a perturbed posterior $\overline{f}(\theta|X)$ which is within $\mathcal{O}(|S|^{-2}n^{-1})$ of $f(\theta|X)$ in total variation norm. For example, in an asymptotic setting where $|S| = \mathcal{O}(\sqrt{n})$, the perturbation error is $\mathcal{O}(n^{-2})$. 

\begin{figure}
    \centering    \includegraphics[width=\linewidth]{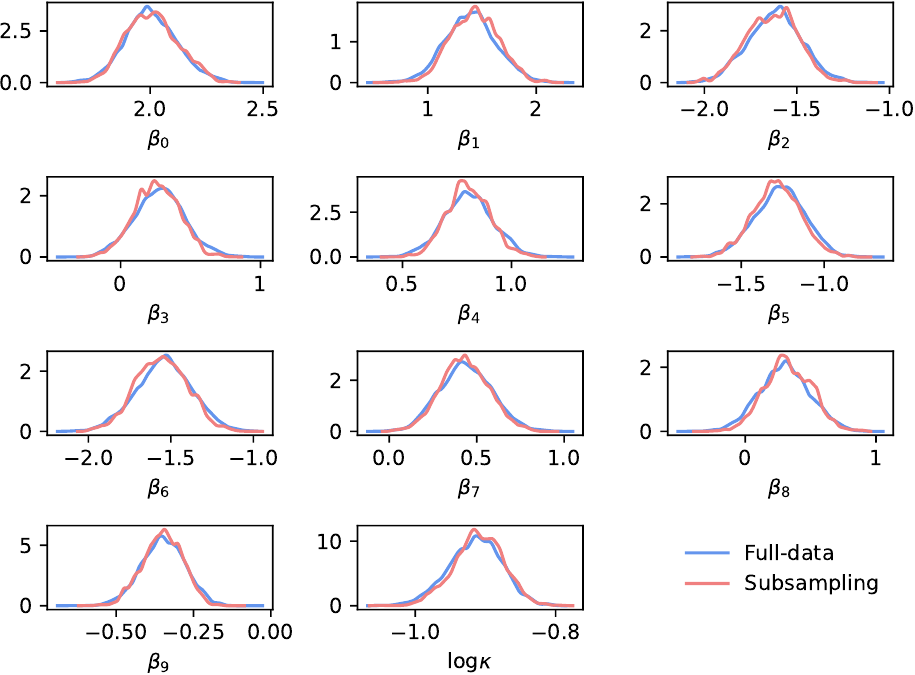}
    \caption{Kernel density estimates of marginal posteriors in Example \ref{ExSimpleSubsampling} with and without subsampling.}
    \label{fig:2nd_order_von_Mises_KDE}
\end{figure}

A key issue when implementing pseudomarginal algorithms is to avoid ``sticky" behaviour of the Markov chain, as this inflates the asymptotic variance. A factor that affects the asymptotic variance is how precise the likelihood estimator is, usually measured via the variance of the log of the likelihood estimator \cite{pitt2012some,doucet2015efficient, sherlock2015efficiency, schmon2021large}. A tuning parameter that controls this variance is the number of ``particles"  used in the estimator (this corresponds to the number of subsamples $k$ used in subsampling). In general, a smaller number of particles, while decreasing the computational cost of each step, increases the variance of the log of the likelihood estimator and may increase the asymptotic variance of the MCMC chain. On the other hand, a larger number of particles, while computationally less efficient, decreases the variance of the log of the likelihood estimator which in turn may decrease the asymptotic variance. A natural question arises: Is it possible to choose the number of particles to achieve optimal performance (in some sense)? Attempts to answer these questions have arrived at different conclusions, depending on the assumptions made. \cite{pitt2012some,doucet2015efficient,sherlock2015efficiency, schmon2021large} suggest that the number of particles should target a variance of the log of the likelihood estimator between 1 and 3.3, resulting in acceptance probabilities in the interval 0.07-0.50. \cite{sherlock2017pseudo} conclude under different assumptions that it is often better to choose a very small number of particles, unless there is a start-up cost. \cite{Quiroz19} follows the approach in \cite{pitt2012some} and suggests the heuristics to set a subsample size that results in $\sigma^2_{\widehat{\log f}} \approx 1$, which in many applications (including Example \ref{ExSimpleSubsampling}) has proven useful.

The role of the control variates is now clear: they determine the factor $\sigma^2_d$ in \eqref{EqVarLogLEst} and thereby $\sigma^2_{\widehat{\log f}}$ --- making it possible to achieve the variance around 1 using a smaller subsample size the more accurate they are. Without any control variates, $\sigma^2_d = \mathcal{O}(1)$ and $\sigma^2_{\widehat{\log f}}=\mathcal{O}(n^2|S|^{-1})$, meaning the subsample size must scale quadratically to keep the variance bounded. \cite{Quiroz19} shows that with the second-order Taylor expanded control variates, $\sigma^2_{\widehat{\log f}}=\mathcal{O}(n^{-1}|S|^{-1})$ and thus scale extremely well for large datasets. 

The next example illustrates the significance of accurate control variates for successful implementation of subsampling. To challenge the methodology, our experiments consider a reasonably small $n$ below (otherwise the variance of the log-likelihood estimator is negligible). The example shows that a control variate based on a first-order Taylor expansion is not accurate enough (given the same subsample size) and results in a sticky pseudomarginal chain.

\begin{figure}
    \centering    \includegraphics[width=\linewidth]{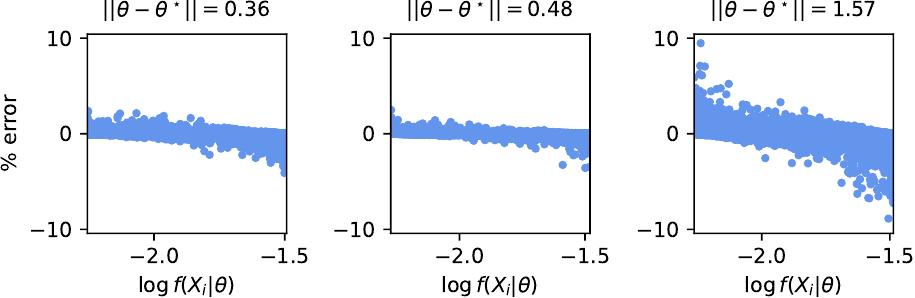}
    \caption{Percentage error for the first-order Taylor control variate approximation $q(X_i|\theta)$ of $\log f(X_i|\theta)$, $i=1,\dots,n$, for three values of $\theta$ in Example \ref{ExSimpleSubsampling}. See the caption of Figure \ref{fig:2nd_order_von_Mises_control_variates} for details.} \label{fig:1st_order_von_Mises_control_variates}
\end{figure}
\newpage

\begin{example}[Subsampling for simple targets] \label{ExSimpleSubsampling}
We consider the following regression model for angular/directional data
\begin{eqnarray*}
 X_{i} | \mu_i, \kappa & \sim & \mathrm{von Mises}\left(\mu_i, \kappa\right), i = 1,\dots, n,\\
 \mu_i & = & 2\arctan\left(\beta^\top z_i\right) \\
 \beta & \sim & \mathcal{N}(0, 10^2 I) \\
 \log \kappa & \sim & \mathcal{N}(0, 10^2),
\end{eqnarray*}
where $I$ denotes the identity matrix and the density of $\mathrm{von Mises}\left(\mu, \kappa\right)$ is given by
$$f(x| \mu, \kappa) = \frac{\exp\left(\kappa \cos(x - \mu)\right)}{2\pi I_0(\kappa)}, \quad -\pi < x < \pi.$$
The parameters in this distribution are the main direction $\mu \in (-\pi, \pi)$ and a measure of concentration $\kappa \in \mathbb{R}$. The normalizing constant $I_0$ is the modified Bessel function of order $0$. With $\theta = (\beta, \log \kappa)^\top$, the log-density is 
\begin{eqnarray*}
 \log f(X_i | \theta) & = & \kappa \cos\left(X_i - 2\arctan\left(\beta^\top z_i\right)\right) - \log(2\pi) - \log I_0(\kappa).
\end{eqnarray*}

We simulate $n=10{,}000$ observations from the model above with $10$ features and construct the control variates $q(X_i|\theta)$ based on a second-order Taylor expansion around the posterior mode $\theta^\star$. The accuracy of the control variates depends on which $\theta$ they are evaluated for. Figure \ref{fig:2nd_order_von_Mises_control_variates} shows the percentage error for three different $\theta$ values, see the caption for details. Figure \ref{fig:2nd_order_von_Mises_KDE} shows the resulting kernel density estimates of the marginal posterior densities using a subsample size of $|S|=100$, which results in estimated values of $\sigma^2_{\widehat{\log f}}$ over the pseudomarginal iterates with mean $\approx 1.5$, median $\approx 0.75$ and max $\approx 21$ (recall that $\approx 1$ is the target). We conclude that subsampling is successful in this application with the given settings and report relative efficiencies between $45{-}68$ (depending on which parameter is estimated) compared to an MCMC using the full-data. The efficiency measure includes the cost of computing the estimator and the asymptotic variance of the MCMC chain; see \cite{Quiroz19} for details.

To illustrate the importance of accurate control variates, let us consider the same settings as above, however, with control variates based on a first-order Taylor expansion. Figure \ref{fig:1st_order_von_Mises_control_variates} shows the percentage error for the same $\theta$ values as in Figure \ref{fig:2nd_order_von_Mises_control_variates}. Figure \ref{fig:1st_order_von_Mises_trace_plots} shows that subsampling is not successful with the less accurate control variates (given the same settings as above). The estimated values of $\sigma^2_{\widehat{\log f}}$ over the pseudomarginal iterates have mean $\approx 11$, median $\approx 7$ and max $\approx 34$, causing the pseudomarginal chain to get stuck. 
\end{example}

\begin{figure}
    \centering    \includegraphics[width=\linewidth]{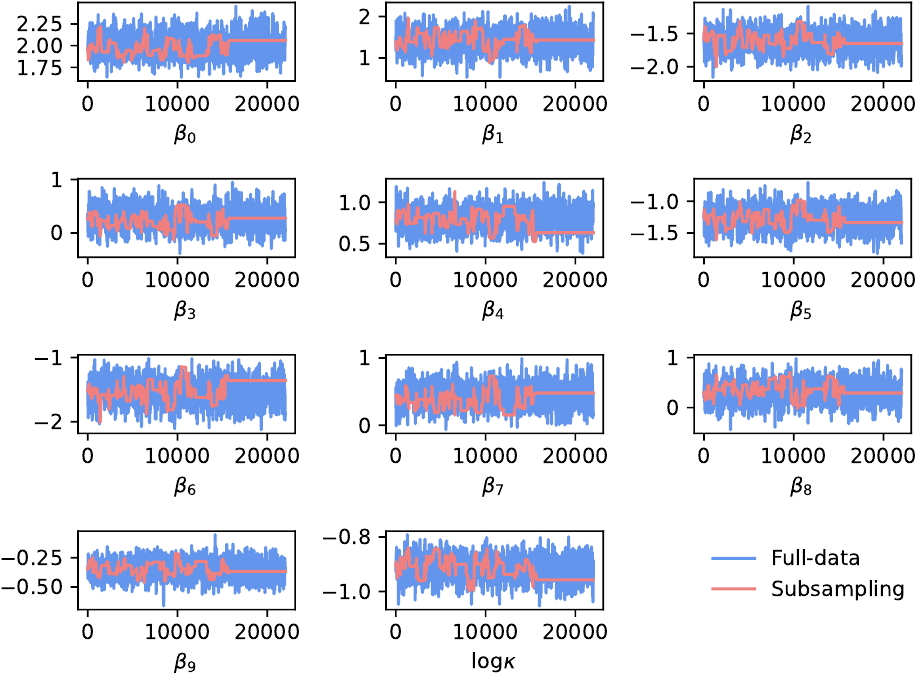}
    \caption{Trace plots in Example \ref{ExSimpleSubsampling} with and without subsampling when using control variates based on a first-order Taylor expansion.}
    \label{fig:1st_order_von_Mises_trace_plots}
\end{figure}

We make some final remarks on the example above. First, the obvious way to prevent the pseudomarginal chain from getting stuck is to increase the subsample size (so that the variance of the estimator is around 1 as discussed above). Second, correlated pseudomarginal methods can be used to allow for a larger variance of the estimator without getting the pseudomarginal chain stuck \cite{tran2016block, deligiannidis2018correlated}. Third, improved grouped control variates have been proposed in \cite{salomone2020spectral}. Fourth, in real large data applications, $n$ is substantially larger, and then the control variates become very accurate. We did not implement any of the above as the purpose of the example is to illustrate the effect of accurate control variates relative to inaccurate control variates.

We call attention to a recently-developed approach to control variates that may be less familiar to a statistical audience than the approaches in \cite{Quiroz19,CVSGMCMC19}. Very informally, the \textit{coreset} approach introduced in \cite{Core16} attempts to approximate  posterior distributions of the form in \eqref{EqProdPost} by finding a small set $S \subseteq \{1,2,\ldots,n\}$ and weights $\{w_{i}\}_{i \in S}$ such that 
\begin{equation} 
f(\theta | X) = f(\theta) \prod_{i=1}^{n} f(X_{i}|\theta) \approx f(\theta) \prod_{i \in S } f(X_{i}|\theta)^{w_{i}}.
\end{equation}
The authors suggest that such an approximation is taking advantage of the fact that large datasets are \textit{redundant.} In a more familiar language, they are using the original single-term likelihoods as $f(X_{i} | \theta)$ as control variates.

The main result of \cite{Core16} stated that one could take a coreset $S$ of size on the order of $\log(n)$ and still obtain good samples. The appeal here is obvious - once you have found $S$ and $\{w_{i}\}_{i \in S}$, running MCMC is much cheaper than even the most aggressive of the random-subsample methods. Of course, these advantages are not free. While the control variates in \cite{Quiroz19,CVSGMCMC19} can be implemented very easily, finding good coresets is very challenging. At the time of writing, no general-purpose methods were known, and efficient computation may require substantial problem-specific adjustments. See Section 3 of the survey \cite{winter2023machine} for descriptions of the state-of-the art for coreset construction methods based on subsampling, sparse regression and variational inference. 

Finally, we note that almost all of the above analysis takes place in the ``standard" regime, where all components of the posterior are somewhat sensitive to all datapoints. In this regime, we've seen that subsampling is quite delicate. However, many important problems are \textit{not} in this regime. For example, in the case of variable selection for high-dimensional Bayesian regression, it is often possible to determine that some coefficients should be exactly 0 with overwhelming probability using only a tiny fraction of the data (see e.g. \cite{screening08} for an overview of this ``screening" problem). In this case, much sparser subsamples can be sufficient to characterize some features of the posterior, and consequently much greater speedups are possible. See \cite{SSLASSO,JMLR:v21:19-536,sparseBayes21} for details on how these observations can be turned into concrete MCMC algorithms.

\subsection{Parallelization and Divide-and-Conquer}

The theory for parallel algorithms is related to the theory for subsampling described in Section \ref{SecSubsSimpleLong}, but more difficult. The most important difference is that parallel algorithms do \textit{not} require subsample-based posteriors (such as equation \eqref{EqSimpleProxy}) to be very close to the full-data posterior (such as \eqref{EqProdPost}). Indeed, in typical applications the subsample-based posteriors associated with different subsamples will have nearly-disjoint support  (see e.g. \cite{Wang2013ParallelizingMV}). 

This results in two fundamentally different ways to parallelize MCMC algorithms based on subsampling. The first approach, called split-and-merge algorithms is conceptually simplest: run MCMC algorithms independently for each subsample, then merge the results. The main difficulty here comes from the fact that the subsamples associated with different datasets will often disagree quite sharply, leading to poor performance. Clever merging strategies can mitigate this difficulty in specific situations \cite{Wang2013ParallelizingMV,neiswanger2013asymptotically,SriDivide18,scott2022bayes, nemeth2018merging}, but few guarantees are available for practical algorithms.

The second choice is difficult to implement but results in controlled error: run one large MCMC algorithm across all cores, designed in such a way that only a small amount of communication is required. Some tricks allow for parallel algorithms that exactly preserve the stationary distribution, but these are often difficult-to-implement or specialized \cite{GPUAccPar19} or require substantial amounts of communication between cores \cite{CalderPar14}. Another option is to give a more generic algorithm that introduces a small but manageable amount of error. This typically involves introducing auxiliary variables that relate the different subsamples, either via an explicit model \cite{RendelGCMC21,pmlr-v151-de-souza22a} or implicitly via an optimization procedure \cite{VCM15,EPWay2020}. As with the other parallel algorithms, few careful analyses are available - though see \cite{Splitting22,pmlr-v139-plassier21a}. 

Although there is a large literature on parallel MCMC, at the time of writing its theoretical properties remain poorly understood. Perhaps the most important lesson for theorists is to recognize that much of the guidance in Section \ref{SecSubsSimpleLong} does \textit{not} seem to apply to parallel MCMC algorithms, even though the situations appear to be similar. In the simple case of split-and-merge algorithms, the individual subposteriors do \textit{not} need to be close to each other (or the full posterior), and so it is \textit{not} necessary to have Markov chains that are small perturbations of the large chain (see e.g. the numerical studies in \cite{Mishchenko2022AsynchronousSB}). In the same vein, there does not seem to be any reason to believe that the component Markov chains of more sophisticated parallel MCMC algorithms should be small perturbations of any full-data MCMC algorithm.

\subsection{Subsampling and Approximation for Complex Targets}

Although there are several examples \cite{rastelli2023computationally, bradley2021approach, saha2023incorporating}, we are not aware of any general techniques for defining or analyzing perturbed Markov chains in this regime. Finding such bounds has been mentioned as a central open problem in \cite{winter2023machine,johndrow2020free}. 

Approximations of complex targets may sometimes result in a posterior for which subsampling is easier, such as a ``product-form" posterior. Examples using the Whittle log-likelihood approximation \cite{whittle1953estimation} of the multivariate Gaussian log-likelihood to subsample stationary time series models are found in \cite{salomone2020spectral, VILLANI2022}. The multivariate Gaussian log-likelihood is expensive to compute: despite the Toeplitz structure of the covariance matrix, the cost scales as the square of the length of the time series. The Whittle log-likelihood approximation utilises asymptotic properties of the discrete Fourier transform (DFT) of the data to form a likelihood in the frequency domain. The DFT at the different frequencies are asymptotically independent, yielding a ``product-form" structure for the resulting likelihood.

\subsection{Inverse Problems from PDEs and Other Expensive Models}\label{SecInvDetail}

The next important class of problems are inverse problems from differential equations. Abstractly, these are problems in which you can observe some features of the solution of a PDE and wish to infer parameters of the PDE model. We give a very short treatment of the subject and focus on perturbations; see Chapter 16 of \cite{brooks2011handbook}, Chapter 11 of this book, and \cite{Dashti2017} for important techniques that are not related to perturbations.

To give a concrete example, we consider Example 1.3 of \cite{Dashti2017}. Here we fix a bounded open set $D \subset \mathbb{R}^{d}$ with boundary $\partial D$ and consider the PDE:
\begin{eqnarray}\label{EqPDEModel}
- \nabla(\kappa \nabla g) = f, \quad x \in D \\
g = 0, \qquad x \in \partial D,
\end{eqnarray}
where $f, \kappa \, : \, \mathbb{R}^{d} \to \mathbb{R}$ are functions. Denote by $g_{\kappa}$ the solution associated with a particular choice of $\kappa$. 

To turn this from a model into a statistical problem, we might imagine a situation in which $f$ is known, we can make noisy observations $y_{j} \sim \mathcal{N}(g(x_{j}),1)$ of the solution $g$ at a finite collection of points $x_{1},\ldots,x_{n}$, and we wish to infer the entire parameter field $\kappa$. In typical mathematical notation, we wish to sample from the posterior $p(\kappa | y)$. More complex models with a similar flavour are used in e.g. inferring information about subsurface hydrology from surface measurements \cite{Subwater10}.

Equation \eqref{EqPDEModel} presents some obvious problems. The most immediate is beyond the scope of this chapter: it is enormously ill-posed, as a finite number of observations can't ``pin down" all of $\kappa$. See the advice in \cite{Dashti2017} for how to deal with ill-posedness. More relevantly for this chapter, doing MCMC requires solving the PDE \eqref{EqPDEModel} for each candidate field $\kappa$. Solving PDEs can be computationally expensive, and there is no obvious way to make this cheaper by e.g. subsampling the observations.

There are two simple strategies for dealing with this computational issue: making the PDE easier to solve and solving it less frequently. The first of these strategies is, fortunately, the subject of an enormous literature. The simplest approach is to discretize the domain $D$ and solve the associated finite-difference equation (see e.g. Chapter 16 of \cite{brooks2011handbook}). It is also possible to solve a regularized version of the full infinite-dimensional problem; see e.g. \cite{Dashti2017}. Finally, multilevel Monte Carlo (MLMC) techniques allow one to use inexpensive low-resolution approximations to $p$ at \textit{most} stages while adaptively using a small number of expensive high-resolution approximations to retain a low overall bias \cite{Giles2013MultilevelMC}. 

With the exception of MLMC,\footnote{or sophisticated adaptive discretization schemes, such as \cite{Bigoni_2020}.} all of these approaches effectively involve replacing one model with a different model that is kept throughout an MCMC run. This means that you can do a perturbation-based analysis directly on the model in equation \eqref{EqPDEModel} \textit{independently} of the choice of MCMC algorithm. Therefore, the resulting theory does not use results such as Theorem \ref{1th:Z_m} or more sophisticated versions as in \cite{johndrow2017error,SchweizerPerturb18,negrea_rosenthal_2021}.

The basic idea behind the second strategy is to solve equation \eqref{EqPDEModel} for some values of $\kappa_{1},\ldots,\kappa_{N}$, then use $\{p(\kappa_{j} | y)\}_{j=1}^{N}$ to fit a model $\hat{p}(\kappa)$ that estimates $p(\kappa | y)$. In principle one could choose $\kappa_{1},\ldots,\kappa_{N}$ \textit{before} running the Markov chain, in which case one could again compute bounds on the error $\hat{p}(\kappa) - p(\kappa|y)$ that applied to all MCMC algorithms. In practice this is unlikely to work well in even moderate-dimensional problems, because it is difficult to choose a set of reference points $\{\kappa_{j}\}_{j=1}^{N}$ that cover the posterior distribution well. As a consequence, it is typically better to choose points \textit{during} the run of an MCMC algorithm (see \cite{Conrad16}). This means that any analysis must combine perturbation bounds such as Theorem \ref{1th:Z_m} with so-called \textit{adaptive MCMC} bounds (see Chapter 4 of \cite{brooks2011handbook} and Chapter 2 of this book).

Fortunately, these approaches combine well, and near-optimal convergence rates are known \cite{davis2022rate}. This theoretical work gives two practical heuristics:
\begin{enumerate}
    \item \textbf{Number of Evaluations:} Choose the number of model evaluations $N=N(T)$ by time step $T$ of the Markov chain so that the (i) model error $\hat{p}(\kappa) - p(\kappa|y)$ and (ii) Monte Carlo error are roughly the same size.
    \item \textbf{Bias and Ergodicity:} As noted in \cite{Conrad16}, natural estimators $\hat{p}$ may \textit{not} result in ergodic Markov chains. To fix this problem, one can introduce some small global bias towards a reference point $\kappa_{0}$, replacing e.g. $\hat{p}(\kappa)$ by $\hat{p}(\kappa)e^{-C_{T} \|\kappa-\kappa_{0}\|^{2}}$ for some sequence $C_{T} \rightarrow 0$. As long as $C_{T}$ is chosen consistency with the heuristic in (1), this introduces negligible Monte Carlo error and stabilizes the resulting algorithm.
\end{enumerate}

Note that these two approximation strategies work on different parts of the problem and can be combined. The first strategy essentially assumes that the solution $p$ of the PDE is insensitive to small perturbations, while the second assumes that the likelihood evaluated at a parameter $\kappa$ is insensitive to small perturbations.

\subsection{Simulation-Based Methods}

In the context of MCMC using approximate Bayesian computation, one has access to an explicit formula for the stationary distribution as a perturbation of the nominal target. In most situations that we are aware of, it is easier to analyze this perturbation formula directly than to propagate error bounds for the perturbed transition kernel. As such, we suggest interested readers look at specialized literature, such as \cite{Sisson2018handbook}, for practical advice on approximate Bayesian computation and related simulation-based methods.

\subsection{Doubly-Intractable Distributions}

See Chapter 16 of this book for practical advice on algorithms for sampling from doubly-intractable distributions.

Although several algorithms for sampling from doubly-intractable distributions can be viewed as perturbations, we are not aware of a substantial perturbation-theoretic literature on the subject.
Let us mention that a related stability analysis of doubly-intractable distributions appears in \cite{habeck2020stability}. Furthermore, \cite{MEDINAAGUAYO20202200} shows that a perturbation theoretic MCMC analysis is in principle possible for plain iid Monte Carlo within Metropolis, though some care is required. However, for more commonly used state-of-the-art algorithms it is not at all obvious how to view them as ``small" perturbations. 

As pointed out in Section \ref{SecTaxonomy}, a non-linear transformation of an unbiased estimator of the normalization constant cannot provide a nonnegative unbiased estimator of the posterior distribution under feasible assumptions. However, it is possible to compute an unbiased estimator of the posterior distribution that is occasionally negative. \cite{Roulette15} suggests to carry out a pseudomarginal algorithm targeting the absolute value of estimator, thus resulting in a perturbed posterior. \cite{quiroz2021block} terms this the \textit{signed} pseudomarginal approach. This perturbed posterior is hard to analyze due to the Russian roulette type of estimators based on random truncation. However, the samples may be re-weighted to obtain consistent estimates of expectations with respect to the doubly-intractable unperturbed posterior, avoiding the need of a perturbation analysis. The asymptotic variance of the signed pseudomarginal has two components: the variance of the log of the likelihood estimator and the probability of a nonnegative estimator. The expressions needed for the optimal tuning proposed in \cite{quiroz2021block} are intractable for the Russian roulette estimators used in \cite{Roulette15}. \cite{yang2022} proposes a more tractable estimator for doubly-intractable problems, and derive heuristic guidelines for optimally tuning the estimator. Moreover, \cite{yang2022} proposes a correlated pseudomarginal version of their algorithm.

\section{Open Questions}
We give some personal favourite open questions related to perturbations of Markov chains and MCMC analysis.

\subsection{Comparison of Algorithms}

 \textbf{Peskun-like or Computer-Aided Orders:} One of the main purposes of MCMC theory is to provide users with some guidance on which algorithm to use in a given situation. Unfortunately, pen-and-paper error estimates for individual MCMC algorithms are almost always so rough as to be useless for this purpose \cite{jones2001honest} (though see e.g. \cite{jacob2020unbiased,CFTP96} for computer-aided estimates that can be substantially sharper). The main theoretical results that give such guidance come from directly comparing two chains, as in Peskun ordering and various followups (see the survey \cite{MiraOrderingSurvey01} as well as recent extensions \cite{andrieu2022comparison,PeskNonrev21,CompExactApprox16} using other techniques). At the moment, there are very few results allowing direct comparison of approximate MCMC algorithms, and these results only compare chains with the same bias \cite{Bornn2014TheUO,sherlock2017pseudo}. Are other comparisons possible? 

\textbf{Nearly-Sharp Analyses:} While gradient and stochastic-gradient methods are very popular, almost no theoretical results are precise enough to allow precise comparisons of algorithms or hyperparameter tuning. This suggests an enormous number of open problems: choose almost any pair of similar methods, and see if you can show that one is better than another under reasonable circumstances. See e.g.  Section 2.4 of \cite{SHMCMCSurvey} for well-supported conjectures. See \cite{chen2023does} for the one recent example of a successful comparison of two popular algorithms.

 \subsection{Theory of Large Perturbations }
In some cases, perturbed Markov chains give good answers even when the size of the perturbation is much larger than bounds like inequality \eqref{eq: 1st_pert_stat} would allow. At the moment, this phenomenon is only understood for a small number of very simple cases \cite{lin2023perturbation}. The following are two settings that are of particular practical interest:
 
\textbf{Doubly-Intractable Distributions:} Published analyses of MCMC algorithms for doubly-intractable algorithms do not generally make use of perturbation theory. The recent empirical results in \cite{kang2023measuring} provide both a possible explanation and a possible route forward. The paper shows that a certain  approximation to the exchange algorithm has a stationary distribution that is very close to the desired target \textit{even in a regime} where the internal approximation is extremely far from convergence. Because the perturbations are large, we don't expect classical perturbation results to apply directly in this context. This leaves the open question: what phenomenon allows the stationary distribution to be close to the target in this situation? 

\textbf{Parallel Algorithms:} Divide-and-conquer methods are a natural alternative to subsampling and other ``noisy" MCMC methods. While some divide-and-conquer methods focus on combining MCMC runs after they have completed \cite{scott2022bayes,Wang2013ParallelizingMV,VynerSwiss23}, many others allow some communication between computers during MCMC runs  \cite{Conrad2016ParallelLA,EPWay2020,RendelGCMC21}. There is widespread perception that a small amount of communication will improve convergence. However, the very feature that makes these algorithms better also makes them harder to analyze with perturbation-based methods: the basic algorithms run on each computer are generally \textit{not} small perturbations of any ideal global algorithm. How can perturbation bounds be combined with other methods, such as those in the variational inference literature, to analyze such limited-communication algorithms? See \cite{Splitting22,VynerSwiss23} for some recent related work.

\subsection{Efficiency of The Pseudomarginal Trick} 

Many perturbed MCMC algorithms arise from replacing an expensive-to-compute quantity by a computationally-cheaper stochastic approximation. In these cases, one can use a plug-in estimator directly and keep the original state space, or use the pseudomarginal trick to obtain a closely-related chain on a larger state space (see e.g. \cite{Quiroz19}). The pseudomarginal algorithm is typically easier to analyze, as one has access to the stationary distribution. However, the following question is still largely open: do these pseudomarginal versions of perturbed algorithms actually have smaller error, or is it merely the case that existing theoretical results give better error bounds? 

\subsection{Optimal Complexity of Subsampling Algorithms} 
At the time of writing, there are almost no sharp analyses on the amount of computational effort required to obtain good samples. We give two informal problems in this area. For a fixed MCMC algorithm, denote by $\tau(n,\epsilon)$ the computational effort needed to get a 
Monte Carlo sample with total root-mean-square error of less than $\epsilon$ for a dataset with $n$ datapoints. The idea behind all approximate MCMC algorithms is to reduce $\tau(n,\epsilon)$, but the following questions about how to do this remain open:

\textbf{Easy Targets:} Consider the logistic regression model, and more generally sufficiently-nice models of the form \eqref{EqProdPost}. It is straightforward to check that $\tau(n,\epsilon) \lesssim n \epsilon^{-2}$ for typical Metropolis-Hastings algorithm, and \cite{Nagapetyan2017,johndrow2020free} show the matching bound $\tau(n,\epsilon) \gtrsim n \epsilon^{-2}$ for stochastic gradient Langevin dynamics and a broad range of reversible chains. Is it possible to break at least the lower bound's dependence on $\epsilon$ by choosing a more clever nonreversible chain? Work on subsampling for continuous-time nonreversible algorithms such as \cite{pmlr-v119-bierkens20a} suggest that the answer might be yes, but to date precise comparisons remain difficult. 

\textbf{Complex Targets:} Are there any natural families of models where $\tau(n,\epsilon)$ grows super-linearly in $n$ for all algorithms? How would one check such a thing? See related open questions in Section 3 of \cite{winter2023machine}.

\subsection{Slice Sampling Under Perturbation}
The slice sampling paradigm offers an efficient way for the construction of a Markov chain that avoids local random walk behavior, requires (almost) no tuning and usually offers a richer choice of possible updates, for details check \cite{neal2003slice} and \cite{murray2010elliptical}.
However, the stability regarding evaluations of the target density is so far not explored, even though the methodology is heavily used 
as black box sampling scheme.

\subsection{Perturbed Quasi-Stationary Markov Chains}
   Recently, killed Markov chains with quasi-stationary distribution have been used for approximate sampling of pre-specified target distributions to perform Bayesian inference for large data sets \cite{pollock2020quasi}. Truncating the killing rate lowers the computational cost and yields a perturbed Markov chain. In \cite[Proposition~1]{rudolf2021perturbation} a preliminary discrete time perturbation result is provided in the approximate killing rate setting. However, this turns out to be of rather theoretical value. In particular an extension to the Wasserstein distance and a ``more general" quantity to measure the difference of the perturbed and unperturbed transition mechanism would be of interest.

\subsection{Improved Control Variates} 
As demonstrated in Section \ref{SecSubsSimpleLong}, control variates are crucial for a successful implementation of subsampling. The appropriateness of a second-order Taylor expanded control variate depends on to which extent the log-density is locally quadratic. \cite{salomone2020spectral} proposes to divide observations into groups and approximate the sum of the log-densities within a group using a second-order Taylor expansion. The Bernstein von Mises theorem (asymptotic normality of the posterior) suggest a quadratic form of the sum of log-densities (assuming regularity conditions), given enough observations in the group. However, for models with highly complex log-densities, the approximation may be poor even after grouping. Exploring control variates that are flexible enough, while still having an $\mathcal{O}(|S|)$ cost, remains an open problem.

\bibliographystyle{plain} 
\let\clearpage\relax 
\bibliography{biblio}

\begin{thebibliography}{100}

\bibitem{alquier2016noisy}
Pierre Alquier, Nial Friel, Richard Everitt, and Aidan Boland.
\newblock Noisy {M}onte {C}arlo: Convergence of {M}arkov chains with
  approximate transition kernels.
\newblock {\em Statistics and Computing}, 26(1-2):29--47, 2016.

\bibitem{andrieu2022comparison}
Christophe Andrieu, Anthony Lee, Sam Power, and Andi~Q Wang.
\newblock Comparison of {M}arkov chains via weak {P}oincar{\'e} inequalities
  with application to pseudo-marginal {MCMC}.
\newblock {\em The Annals of Statistics}, 50(6):3592--3618, 2022.

\bibitem{PeskNonrev21}
Christophe Andrieu and Samuel Livingstone.
\newblock {Peskun–{T}ierney ordering for {M}arkovian {M}onte {C}arlo: Beyond
  the reversible scenario}.
\newblock {\em The Annals of Statistics}, 49(4):1958--1981, 2021.

\bibitem{pseudomarginal}
Christophe Andrieu and Gareth~O. Roberts.
\newblock {The pseudo-marginal approach for efficient {M}onte {C}arlo
  computations}.
\newblock {\em The Annals of Statistics}, 37(2):697--725, 2009.

\bibitem{CompExactApprox16}
Christophe Andrieu and Matti Vihola.
\newblock {Establishing some order amongst exact approximations of {MCMC}s}.
\newblock {\em The Annals of Applied Probability}, 26(5):2661--2696, 2016.

\bibitem{Araman2022}
V.F. Araman, H.~Chen, P.W. Glynn, and Li~Xia.
\newblock On a single server queue fed by scheduled traffic with {P}areto
  perturbations.
\newblock {\em Queuing Systems}, 100:61--91, 2022.

\bibitem{sparseBayes21}
Yves Atchad\'{e} and Liwei Wang.
\newblock {A fast asynchronous {M}arkov chain {M}onte {C}arlo sampler for
  sparse {B}ayesian inference}.
\newblock {\em Journal of the Royal Statistical Society Series B: Statistical
  Methodology}, 85(5):1492--1516, 09 2023.

\bibitem{CVSGMCMC19}
Jack Baker, Paul Fearnhead, Emily~B. Fox, and Christopher Nemeth.
\newblock Control variates for stochastic gradient {MCMC}.
\newblock {\em Statistics and Computing}, 29(3):599–615, 2019.

\bibitem{bardenet2014towards}
R{\'e}mi Bardenet, Arnaud Doucet, and Chris Holmes.
\newblock Towards scaling up {M}arkov chain {M}onte {C}arlo: an adaptive
  subsampling approach.
\newblock In Eric Xing and Tony Jebara, editors, {\em Proceedings of the 31st
  International Conference on Machine Learning}, volume 113, pages 405--413.
  PMLR, 2014.

\bibitem{bardTall17}
R\'{e}mi Bardenet, Arnaud Doucet, and Chris Holmes.
\newblock On {M}arkov chain {M}onte {C}arlo methods for tall data.
\newblock {\em Journal of Machine Learning Research}, 18(1):1515–1557, 2017.

\bibitem{Besag94}
J.~Besag.
\newblock Comments on ``{R}epresentations of knowledge in complex systems" by
  {U}. {G}renander and {M}.{L}. {M}iller.
\newblock {\em Journal of the Royal Statistical Society, Series B},
  56:591--592, 1994.

\bibitem{ScalingHMC13}
Alexandros Beskos, Natesh Pillai, Gareth Roberts, Jesus-Maria Sanz-Serna, and
  Andrew Stuart.
\newblock Optimal tuning of the hybrid {M}onte {C}arlo algorithm.
\newblock {\em Bernoulli}, 19(5A):1501--1534, 2013.

\bibitem{betancourt15}
Michael Betancourt.
\newblock The fundamental incompatibility of scalable {H}amiltonian {M}onte
  {C}arlo and naive data subsampling.
\newblock In Francis Bach and David Blei, editors, {\em Proceedings of the 32nd
  International Conference on Machine Learning}, volume 114 of {\em Proceedings
  of Machine Learning Research}, pages 533--540. PMLR, 2015.

\bibitem{Bhamidi2008MixingTO}
Shankar Bhamidi, Guy Bresler, and Allan Sly.
\newblock {Mixing time of exponential random graphs}.
\newblock {\em The Annals of Applied Probability}, 21(6):2146 -- 2170, 2011.

\bibitem{pmlr-v119-bierkens20a}
Joris Bierkens, Sebastiano Grazzi, Kengo Kamatani, and Gareth Roberts.
\newblock The boomerang sampler.
\newblock In Hal~Daumé III and Aarti Singh, editors, {\em Proceedings of the
  37th International Conference on Machine Learning}, volume 119 of {\em
  Proceedings of Machine Learning Research}, pages 908--918. PMLR, 2020.

\bibitem{Bigoni_2020}
D~Bigoni, Y~Chen, N~Garcia Trillos, Y~Marzouk, and D~Sanz-Alonso.
\newblock Data-driven forward discretizations for {B}ayesian inversion.
\newblock {\em Inverse Problems}, 36(10):105008, 2020.

\bibitem{Bornn2014TheUO}
Luke Bornn, Natesh~S. Pillai, Aaron Smith, and Dawn~B. Woodard.
\newblock The use of a single pseudo-sample in approximate {B}ayesian
  computation.
\newblock {\em Statistics and Computing}, 27:583--590, 2014.

\bibitem{BouEbZim20}
Nawaf Bou-Rabee, Andreas Eberle, and Raphael Zimmer.
\newblock {Coupling and convergence for {H}amiltonian {M}onte {C}arlo}.
\newblock {\em The Annals of Applied Probability}, 30(3):1209--1250, 2020.

\bibitem{Bou23}
Nawaf Bou-Rabee and Katharina Schuh.
\newblock {Convergence of unadjusted Hamiltonian Monte Carlo for mean-field
  models}.
\newblock {\em Electronic Journal of Probability}, 28(none):1--40, 2023.

\bibitem{bradley2021approach}
Jonathan~R Bradley.
\newblock An approach to incorporate subsampling into a generic {B}ayesian
  hierarchical model.
\newblock {\em Journal of Computational and Graphical Statistics},
  30(4):889--905, 2021.

\bibitem{BREYER2001123}
Laird Breyer, Gareth~O. Roberts, and Jeffrey~S. Rosenthal.
\newblock A note on geometric ergodicity and floating-point roundoff error.
\newblock {\em Statistics \& Probability Letters}, 53(2):123--127, 2001.

\bibitem{brooks2011handbook}
Steve Brooks, Andrew Gelman, Galin Jones, and Xiao-Li Meng.
\newblock {\em Handbook of {M}arkov chain {M}onte {C}arlo}.
\newblock CRC press, 2011.

\bibitem{NEURIPS2018_335cd1b9}
Nicolas Brosse, Alain Durmus, and Eric Moulines.
\newblock The promises and pitfalls of stochastic gradient {L}angevin dynamics.
\newblock In S.~Bengio, H.~Wallach, H.~Larochelle, K.~Grauman, N.~Cesa-Bianchi,
  and R.~Garnett, editors, {\em Advances in Neural Information Processing
  Systems}, volume~31, pages 8268--8278, 2018.

\bibitem{CalderPar14}
Ben Calderhead.
\newblock A general construction for parallelizing {M}etropolis-{H}astings
  algorithms.
\newblock {\em Proceedings of the National Academy of Sciences},
  111(49):17408--17413, 2014.

\bibitem{CAO1986433}
Xi-Ren Cao and Yves Dallery.
\newblock An operational approach to perturbation analysis of closed queuing
  networks.
\newblock {\em Mathematics and Computers in Simulation}, 28(6):433--451, 1986.

\bibitem{CASWELL20131727}
Hal Caswell.
\newblock Sensitivity analysis of discrete {M}arkov chains via matrix calculus.
\newblock {\em Linear Algebra and its Applications}, 438(4):1727--1745, 2013.
\newblock 16th ILAS Conference Proceedings, Pisa 2010.

\bibitem{ceperley1999penalty}
DM~Ceperley and Mark Dewing.
\newblock The penalty method for random walks with uncertain energies.
\newblock {\em The Journal of Chemical Physics}, 110(20):9812--9820, 1999.

\bibitem{chatterji2018theory}
Niladri Chatterji, Nicolas Flammarion, Yian Ma, Peter Bartlett, and Michael
  Jordan.
\newblock On the theory of variance reduction for stochastic gradient {M}onte
  {C}arlo.
\newblock In {\em Proceedings of the 35th International Conference on Machine
  Learning}, volume~35, pages 764--773. PMLR, 2018.

\bibitem{Chen22}
Naitong Chen, Zuheng Xu, and Trevor Campbell.
\newblock {B}ayesian inference via sparse {H}amiltonian flows.
\newblock In S.~Koyejo, S.~Mohamed, A.~Agarwal, D.~Belgrave, K.~Cho, and A.~Oh,
  editors, {\em Advances in Neural Information Processing Systems}, volume~35,
  pages 20876--20888. Curran Associates, Inc., 2022.

\bibitem{pmlr-v32-cheni14}
Tianqi Chen, Emily Fox, and Carlos Guestrin.
\newblock Stochastic gradient {H}amiltonian {M}onte {C}arlo.
\newblock In Eric~P. Xing and Tony Jebara, editors, {\em Proceedings of the
  31st International Conference on Machine Learning}, volume~32 of {\em
  Proceedings of Machine Learning Research}, pages 1683--1691. PMLR, 2014.

\bibitem{Chen20}
Yuansi Chen, Raaz Dwivedi, Martin~J. Wainwright, and Bin Yu.
\newblock Fast mixing of {M}etropolized {H}amiltonian {M}onte {C}arlo: Benefits
  of multi-step gradients.
\newblock {\em Journal of Machine Learning Research}, 21(1), 2020.

\bibitem{chen2023does}
Yuansi Chen and Khashayar Gatmiry.
\newblock When does {M}etropolized {H}amiltonian {M}onte {C}arlo provably
  outperform {M}etropolis-adjusted {L}angevin algorithm?
\newblock {\em arXiv preprint arXiv:2304.04724}, 2023.

\bibitem{Crist05}
J.~Andrés Christen and Colin Fox.
\newblock Markov chain {M}onte {C}arlo using an approximation.
\newblock {\em Journal of Computational and Graphical Statistics},
  14(4):795--810, 2005.

\bibitem{Conrad2016ParallelLA}
Patrick~R. Conrad, Andrew~D. Davis, Youssef~M. Marzouk, Natesh~S. Pillai, and
  Aaron Smith.
\newblock Parallel local approximation {MCMC} for expensive models.
\newblock {\em SIAM/ASA J. Uncertain. Quantification}, 6:339--373, 2016.

\bibitem{Conrad16}
Patrick~R. Conrad, Youssef~M. Marzouk, Natesh~S. Pillai, and Aaron Smith.
\newblock Accelerating asymptotically exact {MCMC} for computationally
  intensive models via local approximations.
\newblock {\em Journal of the American Statistical Association},
  111(516):1591--1607, 2016.

\bibitem{ApproxBayes10}
S.L. Cotter, M.~Dashti, and A.M. Stuart.
\newblock Approximation of {B}ayesian inverse problems for {PDE}s.
\newblock {\em SIAM Journal on Numerical Analysis}, 48(1):322--345, 2010.

\bibitem{DALALYAN20195278}
Arnak~S. Dalalyan and Avetik Karagulyan.
\newblock User-friendly guarantees for the {L}angevin {M}onte {C}arlo with
  inaccurate gradient.
\newblock {\em Stochastic Processes and their Applications},
  129(12):5278--5311, 2019.

\bibitem{dang2019}
Khue-Dung Dang, Matias Quiroz, Robert Kohn, Tran Minh-Ngoc, and Mattias
  Villani.
\newblock Hamiltonian {M}onte {C}arlo with energy conserving subsampling.
\newblock {\em Journal of Machine Learning Research}, 20, 2019.

\bibitem{Dashti2017}
Masoumeh Dashti and Andrew~M. Stuart.
\newblock {\em The Bayesian Approach to Inverse Problems}, pages 311--428.
\newblock Springer International Publishing, Cham, 2017.

\bibitem{davis2022rate}
Andrew~D Davis, Youssef Marzouk, Aaron Smith, and Natesh Pillai.
\newblock Rate-optimal refinement strategies for local approximation {MCMC}.
\newblock {\em Statistics and Computing}, 32(4):60, 2022.

\bibitem{davis1970rotation}
Chandler Davis and William~Morton Kahan.
\newblock The rotation of eigenvectors by a perturbation. iii.
\newblock {\em SIAM Journal on Numerical Analysis}, 7(1):1--46, 1970.

\bibitem{pmlr-v80-desa18a}
Chris De~Sa, Vincent Chen, and Wing Wong.
\newblock Minibatch {G}ibbs sampling on large graphical models.
\newblock In Jennifer Dy and Andreas Krause, editors, {\em Proceedings of the
  35th International Conference on Machine Learning}, volume~80 of {\em
  Proceedings of Machine Learning Research}, pages 1165--1173. PMLR, 2018.

\bibitem{pmlr-v151-de-souza22a}
Daniel~A. De~Souza, Diego Mesquita, Samuel Kaski, and Luigi Acerbi.
\newblock Parallel {MCMC} without embarrassing failures.
\newblock In Gustau Camps-Valls, Francisco J.~R. Ruiz, and Isabel Valera,
  editors, {\em Proceedings of The 25th International Conference on Artificial
  Intelligence and Statistics}, volume 151 of {\em Proceedings of Machine
  Learning Research}, pages 1786--1804. PMLR, 2022.

\bibitem{deligiannidis2018correlated}
George Deligiannidis, Arnaud Doucet, and Michael~K Pitt.
\newblock The correlated pseudomarginal method.
\newblock {\em Journal of the Royal Statistical Society Series B: Statistical
  Methodology}, 80(5):839--870, 2018.

\bibitem{DodSub15}
T.~J. Dodwell, C.~Ketelsen, R.~Scheichl, and A.~L. Teckentrup.
\newblock A hierarchical multilevel {M}arkov chain {M}onte {C}arlo algorithm
  with applications to uncertainty quantification in subsurface flow.
\newblock {\em SIAM/ASA Journal on Uncertainty Quantification},
  3(1):1075--1108, 2015.

\bibitem{douc2018markov}
Randal Douc, Eric Moulines, Pierre Priouret, and Philippe Soulier.
\newblock {\em Markov chains}.
\newblock Springer, 2018.

\bibitem{doucet2015efficient}
Arnaud Doucet, Michael~K Pitt, George Deligiannidis, and Robert Kohn.
\newblock Efficient implementation of {M}arkov chain {M}onte {C}arlo when using
  an unbiased likelihood estimator.
\newblock {\em Biometrika}, 102(2):295--313, 2015.

\bibitem{DUANE1987216}
Simon Duane, A.D. Kennedy, Brian~J. Pendleton, and Duncan Roweth.
\newblock Hybrid monte carlo.
\newblock {\em Physics Letters B}, 195(2):216--222, 1987.

\bibitem{NIPS2016_9b698eb3}
Kumar~Avinava Dubey, Sashank J.~Reddi, Sinead~A Williamson, Barnabas Poczos,
  Alexander~J Smola, and Eric~P Xing.
\newblock Variance reduction in stochastic gradient {L}angevin dynamics.
\newblock In D.~Lee, M.~Sugiyama, U.~Luxburg, I.~Guyon, and R.~Garnett,
  editors, {\em Advances in Neural Information Processing Systems}, volume~29,
  pages 1154--1162. Curran Associates, Inc., 2016.

\bibitem{Durmus2016HighdimensionalBI}
Alain Durmus and {\'E}ric Moulines.
\newblock High-dimensional {B}ayesian inference via the unadjusted {L}angevin
  algorithm.
\newblock {\em Bernoulli}, 2016.

\bibitem{Durmus17}
Alain Durmus and {\'E}ric Moulines.
\newblock {Nonasymptotic convergence analysis for the unadjusted {L}angevin
  algorithm}.
\newblock {\em The Annals of Applied Probability}, 27(3):1551--1587, 2017.

\bibitem{EberleMaj19}
Andreas Eberle and {Mateusz B.} Majka.
\newblock Quantitative contraction rates for {M}arkov chains on general state
  spaces.
\newblock {\em Electronic Journal of Probability}, 24, 2019.

\bibitem{screening08}
Jianqing Fan and Jinchi Lv.
\newblock {Sure Independence Screening for Ultrahigh Dimensional Feature
  Space}.
\newblock {\em Journal of the Royal Statistical Society Series B: Statistical
  Methodology}, 70(5):849--911, 10 2008.

\bibitem{Ferre_ledoux_2013}
D.~Ferre, L.~Herve, and James Ledoux.
\newblock Regular perturbation of v-geometrically ergodic {M}arkov chains.
\newblock {\em Journal of Applied Probability}, 50(1):184–194, 2013.

\bibitem{friedli2023solving}
Lea Friedli and Niklas Linde.
\newblock Solving geophysical inversion problems with intractable likelihoods:
  Linearized {G}aussian approximations versus the correlated pseudo-marginal
  method.
\newblock {\em Mathematical Geosciences}, pages 1--21, 2023.

\bibitem{friedli2023inference}
Lea Friedli, Niklas Linde, David Ginsbourger, Alejandro~Fernandez Visentini,
  and Arnaud Doucet.
\newblock Inference of geostatistical hyperparameters with the correlated
  pseudo-marginal method.
\newblock {\em Advances in Water Resources}, 173:104402, 2023.

\bibitem{gelman97}
A.~Gelman, W.~R. Gilks, and G.~O. Roberts.
\newblock {Weak convergence and optimal scaling of random walk {M}etropolis
  algorithms}.
\newblock {\em The Annals of Applied Probability}, 7(1):110--120, 1997.

\bibitem{Giles2013MultilevelMC}
Michael~B. Giles.
\newblock Multilevel {M}onte {C}arlo methods.
\newblock {\em Acta Numerica}, 24:259--328, 2013.

\bibitem{golightly2019correlated}
Andrew Golightly, Emma Bradley, Tom Lowe, and Colin~S Gillespie.
\newblock Correlated pseudo-marginal schemes for time-discretised stochastic
  kinetic models.
\newblock {\em Computational Statistics \& Data Analysis}, 136:92--107, 2019.

\bibitem{habeck2020stability}
M.~Habeck, D.~Rudolf, and B.~Sprungk.
\newblock {Stability of doubly-intractable distributions}.
\newblock {\em Electronic Communications in Probability}, 25:1 -- 13, 2020.

\bibitem{Subwater10}
A.~C. {Hinnell}, T.~P.~A. {Ferr{\'e}}, J.~A. {Vrugt}, J.~A. {Huisman},
  S.~{Moysey}, J.~{Rings}, and M.~B. {Kowalsky}.
\newblock {Improved extraction of hydrologic information from geophysical data
  through coupled hydrogeophysical inversion}.
\newblock {\em Water Resources Research}, 46(11):W00D40, 2010.

\bibitem{ho83}
Y.C. Ho and X.~Cao.
\newblock Perturbation analysis and optimization of queueing network.
\newblock {\em Journal of Optimization Theory and Applications}, 40:559--582,
  1983.

\bibitem{Honest01}
James~P. Hobert and Galin~L. Jones.
\newblock {Honest Exploration of Intractable Probability Distributions via
  Markov Chain Monte Carlo}.
\newblock {\em Statistical Science}, 16(4):312 -- 334, 2001.

\bibitem{hosseini2023spectral}
Bamdad Hosseini and James~E Johndrow.
\newblock Spectral gaps and error estimates for infinite-dimensional
  metropolis--hastings with non-gaussian priors.
\newblock {\em The Annals of Applied Probability}, 33(3):1827--1873, 2023.

\bibitem{pass2017}
Jonathan Huggins, Ryan Adams, and Tamara Broderick.
\newblock {PASS-GLM}: Polynomial approximate sufficient statistics for scalable
  {B}ayesian {GLM} inference.
\newblock In {\em Proceedings of the 31st International Conference on Neural
  Information Processing Systems}, NIPS'17. Curran Associates Inc., 2017.

\bibitem{Core16}
Jonathan~H. Huggins, Trevor Campbell, and Tamara Broderick.
\newblock Coresets for scalable {B}ayesian logistic regression.
\newblock In {\em Proceedings of the 30th International Conference on Neural
  Information Processing Systems}, NIPS'16, page 4087–4095. Curran Associates
  Inc., 2016.

\bibitem{jacob2020unbiased}
Pierre~E Jacob, John O’Leary, and Yves~F Atchad{\'e}.
\newblock Unbiased {M}arkov chain {M}onte {C}arlo methods with couplings.
\newblock {\em Journal of the Royal Statistical Society Series B: Statistical
  Methodology}, 82(3):543--600, 2020.

\bibitem{jacob2015nonnegative}
Pierre~E Jacob and Alexandre~H Thiery.
\newblock On nonnegative unbiased estimators.
\newblock {\em The Annals of Statistics}, 43(2):769--784, 2015.

\bibitem{Horseshoe2020}
James Johndrow, Paulo Orenstein, and Anirban Bhattacharya.
\newblock Scalable approximate {MCMC} algorithms for the horseshoe prior.
\newblock {\em Journal of Machine Learning Research}, 21(1), 2020.

\bibitem{JMLR:v21:19-536}
James Johndrow, Paulo Orenstein, and Anirban Bhattacharya.
\newblock Scalable approximate {MCMC} algorithms for the horseshoe prior.
\newblock {\em Journal of Machine Learning Research}, 21(73):1--61, 2020.

\bibitem{johndrow2017error}
James~E Johndrow and Jonathan~C Mattingly.
\newblock Error bounds for approximations of {M}arkov chains used in {B}ayesian
  sampling.
\newblock {\em arXiv preprint arXiv:1711.05382}, 2017.

\bibitem{johndrow2020free}
James~E. Johndrow, Natesh~S. Pillai, and Aaron Smith.
\newblock No free lunch for approximate {MCMC}.
\newblock {\em arXiv preprint arXiv:2010.12514}, 2020.

\bibitem{MCCLT04}
Galin~L. Jones.
\newblock {On the {M}arkov chain central limit theorem}.
\newblock {\em Probability Surveys}, 1:299--320, 2004.

\bibitem{jones2001honest}
Galin~L Jones and James~P Hobert.
\newblock Honest exploration of intractable probability distributions via
  {M}arkov chain {M}onte {C}arlo.
\newblock {\em Statistical Science}, pages 312--334, 2001.

\bibitem{kang2023measuring}
Bokgyeong Kang, John Hughes, and Murali Haran.
\newblock Measuring sample quality in algorithms for intractable normalizing
  function problems.
\newblock {\em arXiv preprint arXiv:2109.05121}, 2023.

\bibitem{Kartashov2013MaximalCP}
Maxim~V. Kartashov and V.~V. Golomozyi.
\newblock Maximal coupling procedure and stability of discrete {M}arkov chains.
  {I}.
\newblock {\em Theory of Probability and Mathematical Statistics}, 86:93--104,
  2013.

\bibitem{kart86}
N.V. Kartashov.
\newblock Inequalities in theorems of ergodicity and stability for {M}arkov
  chains with common state space. {I}.
\newblock {\em Theory of Probability \& Its Applications}, 3:247--259, 1986.

\bibitem{Kato1995}
Tosio Kato.
\newblock {\em A Perturbation theory for linear operators}.
\newblock Springer Berlin Heidelberg, Berlin, Heidelberg, 1995.

\bibitem{Korattikara2013AusterityIM}
Anoop Korattikara, Yutian Chen, and Max Welling.
\newblock Austerity in {MCMC} land: Cutting the {M}etropolis-{H}astings budget.
\newblock In Eric~P. Xing and Tony Jebara, editors, {\em Proceedings of the
  31st International Conference on Machine Learning}, volume~32, pages
  181--189. PMLR, 2014.

\bibitem{lin2023perturbation}
Na~Lin, Yuanyuan Liu, and Aaron Smith.
\newblock Perturbation analysis of {M}arkov chain {M}onte {C}arlo for graphical
  models, 2023.

\bibitem{Jarno16}
Jarno Lintusaari, Michael~U. Gutmann, Ritabrata Dutta, Samuel Kaski, and Jukka
  Corander.
\newblock {Fundamentals and Recent Developments in Approximate {B}ayesian
  Computation}.
\newblock {\em Systematic Biology}, 66(1):e66--e82, 2016.

\bibitem{PertStat13}
Yuanyuan Liu.
\newblock Perturbation bounds for the stationary distributions of {M}arkov
  chains.
\newblock {\em SIAM Journal on Matrix Analysis and Applications},
  33(4):1057--1074, 2012.

\bibitem{MLMDA23}
M.~B. Lykkegaard, T.~J. Dodwell, C.~Fox, G.~Mingas, and R.~Scheichl.
\newblock Multilevel delayed acceptance {MCMC}.
\newblock {\em SIAM/ASA Journal on Uncertainty Quantification}, 11(1):1--30,
  2023.

\bibitem{Roulette15}
Anne-Marie Lyne, Mark Girolami, Yves Atchadé, Heiko Strathmann, and Daniel
  Simpson.
\newblock On {R}ussian roulette estimates for {B}ayesian inference with
  doubly-intractable likelihoods.
\newblock {\em Statistical Science}, 30:443--467, 2015.

\bibitem{Nest21}
Yi-An Ma, Niladri~S. Chatterji, Xiang Cheng, Nicolas Flammarion, Peter~L.
  Bartlett, and Michael~I. Jordan.
\newblock {Is there an analog of {N}esterov acceleration for gradient-based
  {MCMC}?}
\newblock {\em Bernoulli}, 27(3):1942--1992, 2021.

\bibitem{CompleteRecipe15}
Yi-An Ma, Tianqi Chen, and Emily~B. Fox.
\newblock A complete recipe for stochastic gradient {MCMC}.
\newblock In {\em Proceedings of the 28th International Conference on Neural
  Information Processing Systems - Volume 2}, NIPS'15, page 2917–2925. MIT
  Press, 2015.

\bibitem{SampleOpt19}
Yi-An Ma, Yuansi Chen, Chi Jin, Nicolas Flammarion, and Michael~I. Jordan.
\newblock Sampling can be faster than optimization.
\newblock {\em Proceedings of the National Academy of Sciences},
  116(42):20881--20885, 2019.

\bibitem{fly15}
Dougal Maclaurin and Ryan~P. Adams.
\newblock Firefly {M}onte {C}arlo: Exact {MCMC} with subsets of data.
\newblock In {\em Proceedings of the 24th International Conference on
  Artificial Intelligence}, IJCAI'15, page 4289–4295. AAAI Press, 2015.

\bibitem{pmlr-v89-mangoubi19a}
Oren Mangoubi and Aaron Smith.
\newblock Mixing of {H}amiltonian {M}onte {C}arlo on strongly log-concave
  distributions 2: Numerical integrators.
\newblock In Kamalika Chaudhuri and Masashi Sugiyama, editors, {\em Proceedings
  of the Twenty-Second International Conference on Artificial Intelligence and
  Statistics}, volume~89 of {\em Proceedings of Machine Learning Research},
  pages 586--595. PMLR, 2019.

\bibitem{mangoubi2018dimensionally}
Oren Mangoubi and Nisheeth Vishnoi.
\newblock Dimensionally tight bounds for second-order {H}amiltonian {M}onte
  {C}arlo.
\newblock In S.~Bengio, H.~Wallach, H.~Larochelle, K.~Grauman, N.~Cesa-Bianchi,
  and R.~Garnett, editors, {\em Advances in Neural Information Processing
  Systems}, volume~31, pages 6027--6037. Curran Associates, Inc., 2018.

\bibitem{MEDINAAGUAYO20202200}
Felipe Medina-Aguayo, Daniel Rudolf, and Nikolaus Schweizer.
\newblock Perturbation bounds for {M}onte {C}arlo within {M}etropolis via
  restricted approximations.
\newblock {\em Stochastic Processes and their Applications}, 130(4):2200--2227,
  2020.

\bibitem{medina2016stability}
Felipe~J Medina-Aguayo, Anthony Lee, and Gareth~O Roberts.
\newblock Stability of noisy {M}etropolis--{H}astings.
\newblock {\em Statistics and Computing}, 26:1187--1211, 2016.

\bibitem{Meent14}
Jan-Willem Meent, Brooks Paige, and Frank Wood.
\newblock Tempering by subsampling.
\newblock {\em arXiv:1401.7145}, 2014.

\bibitem{mengersen1996rates}
Kerrie~L Mengersen and Richard~L Tweedie.
\newblock Rates of convergence of the {H}astings and {M}etropolis algorithms.
\newblock {\em The Annals of Statistics}, 24(1):101--121, 1996.

\bibitem{meyn_tweedie_glynn_2009}
Sean Meyn, Richard~L. Tweedie, and Peter~W. Glynn.
\newblock {\em Markov Chains and Stochastic Stability}.
\newblock Cambridge Mathematical Library. Cambridge University Press, 2
  edition, 2009.

\bibitem{MiraOrderingSurvey01}
Antonietta Mira.
\newblock Ordering and improving the performance of {M}onte {C}arlo {M}arkov
  chains.
\newblock {\em Statistical Science}, 16(4):340--350, 2001.

\bibitem{Mishchenko2022AsynchronousSB}
Konstantin Mishchenko, Francis~R. Bach, Mathieu Even, and Blake~E. Woodworth.
\newblock Asynchronous {SGD} beats minibatch {SGD} under arbitrary delays.
\newblock {\em arXiv preprint arXiv:2206.07638}, 2022.

\bibitem{mitrophanov2005sensitivity}
A.~Mitrophanov.
\newblock Sensitivity and convergence of uniformly ergodic {M}arkov chains.
\newblock {\em Journal of Applied Probability}, 42(4):1003--1014, 2005.

\bibitem{MollerIntr06}
J.~Moller, A.~N. Pettitt, R.~Reeves, and K.~K. Berthelsen.
\newblock An efficient {M}arkov chain {M}onte {C}arlo method for distributions
  with intractable normalising constants.
\newblock {\em Biometrika}, 93(2):451--458, 2006.

\bibitem{Montanari14}
Andrea Montanari.
\newblock Computational implications of reducing data to sufficient statistics.
\newblock {\em Electronic Journal of Statistics}, 9(2):2370--2390, 2015.

\bibitem{murray2010elliptical}
Iain Murray, Ryan Adams, and David MacKay.
\newblock Elliptical slice sampling.
\newblock In {\em Proceedings of the thirteenth international conference on
  artificial intelligence and statistics}, pages 541--548. JMLR Workshop and
  Conference Proceedings, 2010.

\bibitem{MurraryExchange06}
Iain Murray, Zoubin Ghahramani, and David J.~C. MacKay.
\newblock {MCMC} for doubly-intractable distributions.
\newblock In {\em Proceedings of the Twenty-Second Conference on Uncertainty in
  Artificial Intelligence}, UAI'06, page 359–366. AUAI Press, 2006.

\bibitem{Nagapetyan2017}
T~Nagapetyan, A~B Duncan, and L~Hasenclever.
\newblock The true cost of stochastic gradient {L}angevin dynamics.
\newblock {\em arXiv preprint arXiv:1706.02692}, 2017.

\bibitem{Neal95}
R.~Neal.
\newblock {\em Bayesian Learning for Neural Networks}.
\newblock PhD thesis, University of Toronto, 1995.

\bibitem{neal2003slice}
Radford~M Neal.
\newblock Slice sampling.
\newblock {\em The Annals of Statistics}, 31(3):705--767, 2003.

\bibitem{negrea_rosenthal_2021}
Jeffrey Negrea and Jeffrey~S. Rosenthal.
\newblock Approximations of geometrically ergodic reversible {M}arkov chains.
\newblock {\em Advances in Applied Probability}, 53(4):981–1022, 2021.

\bibitem{neiswanger2013asymptotically}
Willie Neiswanger, Chong Wang, and Eric Xing.
\newblock Asymptotically exact, embarrassingly parallel {MCMC}.
\newblock {\em arXiv preprint arXiv:1311.4780}, 2013.

\bibitem{SHMCMCSurvey}
Christopher Nemeth and Paul Fearnhead.
\newblock Stochastic gradient {M}arkov chain {M}onte {C}arlo.
\newblock {\em Journal of the American Statistical Association},
  116(533):433--450, 2021.

\bibitem{nemeth2018merging}
Christopher Nemeth, Chris Sherlock, et~al.
\newblock Merging {MCMC} subposteriors through {G}aussian-process
  approximations.
\newblock {\em Bayesian Analysis}, 13(2):507--530, 2018.

\bibitem{IntractableSurvey18}
Jaewoo Park and Murali Haran.
\newblock Bayesian inference in the presence of intractable normalizing
  functions.
\newblock {\em Journal of the American Statistical Association},
  113(523):1372--1390, 2018.

\bibitem{LangevinScaling12}
Natesh~S. Pillai, Andrew~M. Stuart, and Alexandre~H. Thi{\'e}ry.
\newblock {Optimal scaling and diffusion limits for the Langevin algorithm in
  high dimensions}.
\newblock {\em The Annals of Applied Probability}, 22(6):2320--2356, 2012.

\bibitem{pitt2012some}
Michael~K Pitt, Ralph dos Santos~Silva, Paolo Giordani, and Robert Kohn.
\newblock On some properties of {M}arkov chain {M}onte {C}arlo simulation
  methods based on the particle filter.
\newblock {\em Journal of Econometrics}, 171(2):134--151, 2012.

\bibitem{pmlr-v139-plassier21a}
Vincent Plassier, Maxime Vono, Alain Durmus, and Eric Moulines.
\newblock {DG-LMC}: A turn-key and scalable synchronous distributed {MCMC}
  algorithm via {L}angevin {M}onte {C}arlo within {G}ibbs.
\newblock In Marina Meila and Tong Zhang, editors, {\em Proceedings of the 38th
  International Conference on Machine Learning}, volume 139 of {\em Proceedings
  of Machine Learning Research}, pages 8577--8587. PMLR, 2021.

\bibitem{pollock2020quasi}
Murray Pollock, Paul Fearnhead, Adam~M Johansen, and Gareth~O Roberts.
\newblock Quasi-stationary monte carlo and the scale algorithm.
\newblock {\em Journal of the Royal Statistical Society Series B: Statistical
  Methodology}, 82(5):1167--1221, 2020.

\bibitem{PolsonPolyaGamma12}
Nicholas Polson, James Scott, and Jesse Windle.
\newblock Bayesian inference for logistic models using {P}olya-{G}amma latent
  variables.
\newblock {\em Journal of the American Statistical Association}, 108, 2012.

\bibitem{CFTP96}
James~Gary Propp and David~Bruce Wilson.
\newblock Exact sampling with coupled {M}arkov chains and applications to
  statistical mechanics.
\newblock {\em Random Structures \& Algorithms}, 9(1-2):223--252, 1996.

\bibitem{Quiroz19}
Matias Quiroz, Robert Kohn, Mattias Villani, and Minh-Ngoc Tran.
\newblock Speeding up {MCMC} by efficient data subsampling.
\newblock {\em Journal of the American Statistical Association},
  114(526):831--843, 2019.

\bibitem{Quiroz18}
Matias Quiroz, Robert Kohn, Mattias Villani, Minh-Ngoc Tran, and Khue-Dung
  Dang.
\newblock Subsampling {MCMC} - an introduction for the survey statistician.
\newblock {\em Sankhya A}, 80(1):33--69, 2018.

\bibitem{quiroz18delayed}
Matias Quiroz, Minh-Ngoc Tran, Mattias Villani, and Robert Kohn.
\newblock Speeding up {MCMC} by delayed acceptance and data subsampling.
\newblock {\em Journal of Computational and Graphical Statistics},
  27(1):12--22, 2018.

\bibitem{quiroz2021block}
Matias Quiroz, Minh-Ngoc Tran, Mattias Villani, Robert Kohn, and Khue-Dung
  Dang.
\newblock The block-{P}oisson estimator for optimally tuned exact subsampling
  {MCMC}.
\newblock {\em Journal of Computational and Graphical Statistics},
  30(4):877--888, 2021.

\bibitem{VCM15}
Maxim Rabinovich, Elaine Angelino, and Michael~I Jordan.
\newblock Variational consensus {M}onte {C}arlo.
\newblock In C.~Cortes, N.~Lawrence, D.~Lee, M.~Sugiyama, and R.~Garnett,
  editors, {\em Advances in Neural Information Processing Systems}, volume~28,
  pages 1783--1791. Curran Associates, Inc., 2015.

\bibitem{rastelli2023computationally}
Riccardo Rastelli, Florian Maire, and Nial Friel.
\newblock Computationally efficient inference for latent position network
  models.
\newblock {\em arXiv preprint arXiv:1804.02274v3}, 2023.

\bibitem{RendelGCMC21}
Lewis~J. Rendell, Adam~M. Johansen, Anthony Lee, and Nick Whiteley.
\newblock Global consensus {M}onte {C}arlo.
\newblock {\em Journal of Computational and Graphical Statistics},
  30(2):249--259, 2021.

\bibitem{SSLASSO}
Veronika Ročková and Edward George.
\newblock The spike-and-slab lasso.
\newblock {\em Journal of the American Statistical Association}, 113:0, 01
  2018.

\bibitem{SchweizerPerturb18}
Daniel Rudolf and Nikolaus Schweizer.
\newblock Perturbation theory for {M}arkov chains via {W}asserstein distance.
\newblock {\em Bernoulli}, 24:2610--2639, 2018.

\bibitem{rudolf2021perturbation}
Daniel Rudolf and Andi~Q Wang.
\newblock Perturbation theory for killed {M}arkov processes and
  quasi-stationary distributions.
\newblock {\em arXiv preprint arXiv:2109.13819}, 2021.

\bibitem{saha2023incorporating}
Sudipto Saha and Jonathan~R Bradley.
\newblock Incorporating subsampling into {B}ayesian models for high-dimensional
  spatial data.
\newblock {\em arXiv preprint arXiv:2305.13221}, 2023.

\bibitem{salomone2020spectral}
Robert Salomone, Matias Quiroz, Robert Kohn, Mattias Villani, and Minh-Ngoc
  Tran.
\newblock Spectral subsampling {MCMC} for stationary time series.
\newblock In {\em Proceedings of the 37th International Conference on Machine
  Learning}, volume 119, pages 8449--8458. PMLR, 2020.

\bibitem{schmon2021large}
Sebastian~M Schmon, George Deligiannidis, Arnaud Doucet, and Michael~K Pitt.
\newblock Large-sample asymptotics of the pseudo-marginal method.
\newblock {\em Biometrika}, 108(1):37--51, 2021.

\bibitem{scott2022bayes}
Steven~L Scott, Alexander~W Blocker, Fernando~V Bonassi, Hugh~A Chipman,
  Edward~I George, and Robert~E McCulloch.
\newblock Bayes and big data: The consensus {M}onte {C}arlo algorithm.
\newblock In {\em Big Data and Information Theory}, pages 8--18. Routledge,
  2016.

\bibitem{Seneta91}
E.~Seneta.
\newblock {\em Numerical Solutions of {M}arkov chains}, chapter Sensitivity
  analysis, ergodicity coefficients, and rank-one updates for finite {M}arkov
  chains., pages 121--129.
\newblock Routledge, 1991.

\bibitem{sherlock2017pseudo}
Chris Sherlock, Alexandre~H Thiery, and Anthony Lee.
\newblock Pseudo-marginal {M}etropolis--{H}astings sampling using averages of
  unbiased estimators.
\newblock {\em Biometrika}, 104(3):727--734, 2017.

\bibitem{sherlock2015efficiency}
Chris Sherlock, Alexandre~H. Thiery, Gareth~O. Roberts, and Jeffrey~S.
  Rosenthal.
\newblock On the efficiency of pseudo-marginal random walk {M}etropolis
  algorithms.
\newblock {\em The Annals of Statistics}, 43(1):238 -- 275, 2015.

\bibitem{NEURIPS2019_a9986cb0}
Bin Shi, Simon~S Du, Weijie Su, and Michael~I Jordan.
\newblock Acceleration via symplectic discretization of high-resolution
  differential equations.
\newblock In H.~Wallach, H.~Larochelle, A.~Beygelzimer, F.~d\textquotesingle
  Alch\'{e}-Buc, E.~Fox, and R.~Garnett, editors, {\em Advances in Neural
  Information Processing Systems}, volume~32, pages 5744--5752. Curran
  Associates, Inc., 2019.

\bibitem{Sisson2018handbook}
S.A. Sisson, Y.~Fan, and M.~Beaumont.
\newblock {\em Handbook of Approximate {B}ayesian Computation}.
\newblock CRC press, 2018.

\bibitem{smith2021regularisationSGD}
Samuel~L Smith, Benoit Dherin, David Barrett, and Soham De.
\newblock On the origin of implicit regularization in stochastic gradient
  descent.
\newblock In {\em International Conference on Learning Representations}, 2021.

\bibitem{SriDivide18}
Sanvesh Srivastava, Cheng Li, and David~B. Dunson.
\newblock Scalable {B}ayes via barycenter in {W}asserstein space.
\newblock {\em Journal of Machine Learning Research}, 19(1):312–346, 2018.

\bibitem{ERGM96}
David Strauss.
\newblock On a general class of models for interaction.
\newblock {\em SIAM Review}, 28(4):513--527, 1986.

\bibitem{stuart2018posterior}
Andrew Stuart and Aretha Teckentrup.
\newblock Posterior consistency for {G}aussian process approximations of
  {B}ayesian posterior distributions.
\newblock {\em Mathematics of Computation}, 87(310):721--753, 2018.

\bibitem{TalayNumericSDE90}
Denis Talay and Luciano Tubaro.
\newblock Expansion of the global error for numerical schemes solving
  stochastic differential equations.
\newblock {\em Stochastic Analysis and Applications}, 8(4):483--509, 1990.

\bibitem{Tav97}
S.~Tavare, D.~Balding, R~Griffiths, and P.~Donnelly.
\newblock Inferring coalescence times from {DNA} sequence data.
\newblock {\em Genetics}, 145(2):505--518, 1997.

\bibitem{GPUAccPar19}
Alexander Terenin, Shawfeng Dong, and David Draper.
\newblock {GPU}-accelerated {G}ibbs sampling: a case study of the horseshoe
  probit model.
\newblock {\em Statistics and Computing}, 29, 2019.

\bibitem{tran2016block}
M-N Tran, R~Kohn, M~Quiroz, and M~Villani.
\newblock The block pseudo-marginal sampler.
\newblock {\em arXiv preprint arXiv:1603.02485}, 2016.

\bibitem{EPWay2020}
Aki Vehtari, Andrew Gelman, Tuomas Sivula, Pasi Jyl\"{a}nki, Dustin Tran,
  Swupnil Sahai, Paul Blomstedt, John~P. Cunningham, David Schiminovich, and
  Christian~P. Robert.
\newblock Expectation propagation as a way of life: A framework for {B}ayesian
  inference on partitioned data.
\newblock {\em Journal of Machine Learning Research}, 21(1), 2020.

\bibitem{NEURIPS2019_65a99bb7}
Santosh Vempala and Andre Wibisono.
\newblock Rapid convergence of the unadjusted {L}angevin algorithm:
  Isoperimetry suffices.
\newblock In H.~Wallach, H.~Larochelle, A.~Beygelzimer, F.~d\textquotesingle
  Alch\'{e}-Buc, E.~Fox, and R.~Garnett, editors, {\em Advances in Neural
  Information Processing Systems}, volume~32, pages 8094 -- 8106. Curran
  Associates, Inc., 2019.

\bibitem{VILLANI2022}
Mattias Villani, Matias Quiroz, Robert Kohn, and Robert Salomone.
\newblock Spectral subsampling {MCMC} for stationary multivariate time series
  with applications to vector {ARTFIMA} processes.
\newblock {\em Econometrics and Statistics}, 2022.

\bibitem{JMLR:v17:15-494}
Sebastian~J. Vollmer, Konstantinos~C. Zygalakis, and Yee~Whye Teh.
\newblock Exploration of the (non-)asymptotic bias and variance of stochastic
  gradient {L}angevin dynamics.
\newblock {\em Journal of Machine Learning Research}, 17(159):1--48, 2016.

\bibitem{Lux07}
Ulrike von Luxberg.
\newblock A tutorial on spectral clustering.
\newblock {\em Statistics and Computing}, 17(4):394--416, 2007.

\bibitem{Splitting22}
Maxime Vono, Daniel Paulin, and Arnaud Doucet.
\newblock Efficient {MCMC} sampling with dimension-free convergence rate using
  {ADMM}-type splitting.
\newblock {\em Journal of Machine Learning Research}, 23(25):1--69, 2022.

\bibitem{VynerSwiss23}
Callum Vyner, Christopher Nemeth, and Chris Sherlock.
\newblock Swiss: A scalable {M}arkov chain {M}onte {C}arlo divide-and-conquer
  strategy.
\newblock {\em Stat}, 12(1):e523, 2023.

\bibitem{Wang2013ParallelizingMV}
Xiangyu Wang and David~B. Dunson.
\newblock Parallelizing {MCMC} via {W}eierstrass sampler.
\newblock {\em arXiv: 1312.4605}, 2013.

\bibitem{wellingSGLD11}
Max Welling and Yee~Whye Teh.
\newblock Bayesian learning via stochastic gradient {L}angevin dynamics.
\newblock In {\em Proceedings of the 28th International Conference on
  International Conference on Machine Learning}, volume 110, page 681–688.
  PMLR, 2011.

\bibitem{whittle1953estimation}
Peter Whittle.
\newblock Estimation and information in stationary time series.
\newblock {\em Arkiv f{\"o}r matematik}, 2(5):423--434, 1953.

\bibitem{winter2023machine}
Steven Winter, Trevor Campbell, Lizhen Lin, Sanvesh Srivastava, and David~B
  Dunson.
\newblock Machine learning and the future of {B}ayesian computation.
\newblock {\em arXiv preprint arXiv:2304.11251}, 2023.

\bibitem{wiqvist2021efficient}
Samuel Wiqvist, Andrew Golightly, Ashleigh~T McLean, and Umberto Picchini.
\newblock Efficient inference for stochastic differential equation
  mixed-effects models using correlated particle pseudo-marginal algorithms.
\newblock {\em Computational Statistics \& Data Analysis}, 157:107151, 2021.

\bibitem{yan2019adaptive}
Liang Yan and Tao Zhou.
\newblock Adaptive multi-fidelity polynomial chaos approach to {B}ayesian
  inference in inverse problems.
\newblock {\em Journal of Computational Physics}, 381:110--128, 2019.

\bibitem{yang2022}
Yu~Yang, Matias Quiroz, Robert Kohn, and Scott~A Sisson.
\newblock A correlated pseudo-marginal approach to doubly intractable problems.
\newblock {\em arXiv preprint arXiv:2210.02734}, 2022.

\bibitem{SamDav15}
Y.~YU, T.~WANG, and R.~J. SAMWORTH.
\newblock A useful variant of the {D}avis—{K}ahan theorem for statisticians.
\newblock {\em Biometrika}, 102(2):315--323, 2015.

\bibitem{ERGM23}
Q.~Zhang and F~Liang.
\newblock Bayesian analysis of exponential random graph models using stochastic
  gradient {M}arkov chain {M}onte {C}arlo.
\newblock {\em Bayesian Analysis}, 2023.

\bibitem{Zh00}
S.~Zhang.
\newblock Existence and application of optimal {M}arkovian coupling with
  respect to non-negative lower semi-continuous functions.
\newblock {\em Acta Math Sinica}, 16(2):261--270, 2000.

\bibitem{zhang2023improved}
Shunshi Zhang, Sinho Chewi, Mufan Li, Krishna Balasubramanian, and Murat~A
  Erdogdu.
\newblock Improved discretization analysis for underdamped {L}angevin {M}onte
  {C}arlo.
\newblock In {\em The Thirty Sixth Annual Conference on Learning Theory}, pages
  36--71. PMLR, 2023.

\bibitem{NEURIPS2021_2b6bb535}
Difan Zou, Jingfeng Wu, Vladimir Braverman, Quanquan Gu, Dean~P Foster, and
  Sham Kakade.
\newblock The benefits of implicit regularization from {SGD} in least squares
  problems.
\newblock In M.~Ranzato, A.~Beygelzimer, Y.~Dauphin, P.S. Liang, and J.~Wortman
  Vaughan, editors, {\em Advances in Neural Information Processing Systems},
  volume~34, pages 5456--5468. Curran Associates, Inc., 2021.

\bibitem{NEURIPS2019_c3535feb}
Difan Zou, Pan Xu, and Quanquan Gu.
\newblock Stochastic gradient {H}amiltonian {M}onte {C}arlo methods with
  recursive variance reduction.
\newblock In H.~Wallach, H.~Larochelle, A.~Beygelzimer, F.~d\textquotesingle
  Alch\'{e}-Buc, E.~Fox, and R.~Garnett, editors, {\em Advances in Neural
  Information Processing Systems}, volume~32, pages 3835--3846. Curran
  Associates, Inc., 2019.

\end{thebibliography}

\end{document}